\newtheorem{lemma}{Lemma}
\newtheorem{example}{Example}
\newtheorem{definition}{Definition}
\definecolor{green1}{rgb}{0.2,0.7,0.2}
\definecolor{brown}{rgb}{1,0.5,0.2}
\title{\LARGE \bf
Learning How to Strategically Disclose Information
}
\author{Raj Kiriti Velicheti,~ Melih Bastopcu,~ S. Rasoul Etesami,  ~Tamer~Ba{\c s}ar
\thanks{The authors are affiliated with the Coordinated Science Lab, University of Illinois Urbana-Champaign, Urbana, IL, USA 61801. Emails:
        {\tt\small \{rkv4,bastopcu,etesami1,basar1\}@illinois.edu}.}
\thanks{Research of the authors was supported in part by ARO MURI Grant AG285, and in part by AFOSR YIP Grant FA9550-23-1-0107 and by NSF CAREER Grant EPCN-1944403. 
}%
}
\begin{document}

\maketitle
\thispagestyle{empty}
\pagestyle{empty}

\begin{abstract}
Strategic information disclosure, in its simplest form, considers a game between an information provider (sender) who has access to some private information that an information receiver is interested in. While the receiver takes an action that affects the utilities of both players, the sender can design information (or modify beliefs) of the receiver through signal commitment, hence posing a Stackelberg game. However, obtaining a Stackelberg equilibrium for this game traditionally requires the sender to have access to the receiver's objective. In this work, we consider an online version of information design where a  sender interacts with a receiver of an unknown type who is adversarially chosen at each round. Restricting attention to Gaussian prior and quadratic costs for the sender and the receiver, we show that $\mathcal{O}(\sqrt{T})$ regret is achievable with full information feedback, where $T$ is the total number of interactions between the sender and the receiver. Further, we propose a novel parametrization that allows the sender to achieve $\mathcal{O}(\sqrt{T})$ regret for a general convex utility function. We then consider the Bayesian Persuasion problem with an additional cost term in the objective function, which penalizes signaling policies that are more informative and obtain $\mathcal{O}(\log(T))$ regret. Finally, we establish a sublinear regret bound for the partial information feedback setting and provide simulations to support our theoretical results.

\end{abstract}

\vspace{-0.25em}
\section{INTRODUCTION}
\vspace{-0.25em}
Asymmetric access to information is ubiquitous, especially in the growing world of online platforms. For example, consider the scenario where a platform wants to sell an advertisement slot (ad-slot) to a bidder. Although click-through rate is the most commonly used metric to measure the value of this slot, with the raising awareness of harms caused by misinformation, the bidders might also care about the information around which the advertisement is being placed. On the other hand, the seller might just want the slot to be sold for a higher price. While this mismatch in objective seems natural, as learning algorithms get better at understanding the content, the platform might have access to higher-quality information regarding the ad-slot. The platform, as the information provider, can reveal all this information to the bidder, who is the information receiver; however, due to the possible mismatch of their utilities, the platform could gain by revealing information strategically. Although the real situation is more complex with the presence of multiple bidders and dependent utilities, handling the simple case of a single bidder is instructive as an initial step.

Such strategic interactions occur in many other natural situations, such as disclosing the financial health of an institution, product ratings, and recommendation systems. While capturing such information exchange might be arbitrarily complex, a mathematical model for these interactions can be characterized, assuming that the players are rational and Bayesian. In the seminal work \cite{kamenica2011bayesian}, Kamenica and Gentzkow introduced a simple model to capture such interactions. Termed as Bayesian Persuasion, they consider a two-player game between a \textit{sender} who has access to some private information (state of nature) that is of interest to a \textit{receiver} whose action affects the utility of both players. While the sender cannot influence the receiver's actions with external transfers, they can shape the information that the receiver observes and thus influence their action. In contrast to the more classical framework of Strategic Information Transmission \cite{crawford1982strategic}, which considers a game with a similar setup and computes Nash equilibrium between the players, Bayesian Persuasion \cite{kamenica2011bayesian} is a Stackelberg game \cite{bacsar1998dynamic} with the sender moving first, and committing to a disclosure strategy. More generally, information design deals with games in which the designer (leader) can influence outcomes by cautiously shaping the beliefs of the players (followers).

Various extensions to the base model followed, including multiple senders \cite{gentzkow2016competition}, multiple receivers \cite{wang2013bayesian}, and multiple rounds of interaction between the sender and the receiver. We refer to \cite{kamenica2019bayesian} for a more comprehensive survey. Of particular interest to this work is costly signaling. More precisely, the original Bayesian Persuasion model assumes that the strategies are costless, and thus the sender can commit to a policy of arbitrary informativeness without facing any consequences other than their utility, which might not be true in most real-world examples. Information acquisition has innate costs, such as in the example described earlier, when gathering more information about the ad-slot might require training better learning models or paying third parties to gather such information. Hence, committing to an informative signal is costly. In an extension to their own work \cite{gentzkow2014costly}, the authors consider costly persuasion and solve for the optimal strategy of the sender. Note that the cost considered is on the signaling strategy and not on the message being communicated.

All the above works deal with the situation when the uncertain state belongs to a finite set, and the action space of the receiver is also finite. However, as stated in the leading example, we may have a continuous state and possibly multiple dimensions (e.g., attributes of a product). A different approach is required to solve such cases. Note that concavification becomes computationally intractable as the states of nature become large. While solving for general cost functions for infinite spaces is still an open problem, a general approach to deal with this is to assume quadratic utilities for the players and Gaussian uncertainty on the state, making the problem tractable even in the continuous setting \cite{tamura2018bayesian}. Recently, this approach has been extended to a dynamic game\cite{sayin2019hierarchical} and multiple senders \cite{velicheti2023value}, and a comparison with cheap talk has been analyzed in \cite{saritacs2016quadratic}. A central limitation of all the above models is that the sender is assumed to know the receiver's utility and thus can compute its best response and design its optimal signaling scheme accordingly. A recent work \cite{castiglioni2020online} addresses this limitation for the finite state problem by framing an online version of Bayesian Persuasion where the sender repeatedly interacts with a stream of adversarially selected receivers from a set of finite types. Unfortunately, finding an optimal sender's strategy, polynomial in the size of the state, is NP-hard, and thus the exponential complexity in state space is unavoidable.

Motivated by this complexity result, our goal is to investigate a case of continuous (and thus infinite) states and actions. First, we pose a \textit{costly} Bayesian Persuasion problem in the continuous state and action setting. Assuming a Gaussian prior and quadratic cost function along with costly signaling, we show that it is sufficient to consider linear plus noise policies for the sender when the signaling cost is inspired from the reduction in entropy of the estimation error. Hence, the problem, in spirit, boils down to an online convex program, which we show not only achieves a ($\mathcal{O}(\log T)$) sub-linear regret but also can be solved in polynomial time at each round. Further, we expand beyond quadratic costs and consider general convex costs for the sender but restrict our attention to linear policies. In this setting, we utilize a novel parametrization to obtain sub-linear regret. Finally, we provide simulations to support our results. We note that we let the receivers be chosen from a bounded set and hence move beyond countably finite types introduced in \cite{castiglioni2020online}. Further, due to the computational barrier of concavification beyond a few states, we require techniques that are different from prior works on online Bayesian Persuasion. Finally, we emphasize that we provide a black-box reduction of Bayesian Persuasion to an online convex optimization, and thus it can be easily used along with any efficient learning algorithm.

It is also worth noting that the learning techniques utilized in this paper are similar to the ones used in online matrix completion \cite{hazan2012near}. \cite{balcan2015commitment} considers a Stackelberg game where the leader plays a game against the follower of an unknown type who is adversarially chosen from a fixed set at each round. Although similar in setup, the sender in our game faces an infinite-dimensional optimization.

\subsubsection*{Notations}

We denote by $[m]$ the set $\{1,\dots,m\}$. $\mathbb{N}(0,\Sigma)$ denotes a Gaussian distribution with mean 0 and co-variance $\Sigma$, where $\Sigma$ is a positive semi-definite (psd) matrix. For a psd matrix $S$, we use $S^{\frac{1}{2}}$ to denote its unique psd square root. 
$Tr(.)$ denotes the trace of a matrix. 
For a given matrix $A$ and a vector $y$, $A^\top$ and $y^\top$ denote the transposes of that matrix and the vector, respectively.
The identity and zero matrices are denoted by $I$ and $O$, respectively. The determinant of a square matrix $A$ is denoted by $det(A)$. The Frobenius norm of a matrix Q is denoted by $\|Q\|_F$. $\Pi_{\mathscr{K}}(\cdot)$ denotes projection onto a set $\mathscr{K}$.
\vspace{-0.2em}
\section{System Model and Problem Formulation}\label{sec:System_model}
Consider a costly information disclosure (a.k.a. persuasion) game between a sender and a receiver, each with its own (possibly different) cost functions. The sender has access to the realization of an underlying state $\omega\in \mathbb{R}^d$ which is randomly drawn from a commonly known zero-mean Gaussian prior, i.e., $\omega\sim\mu_0(\omega)$, where $\mu_0(\omega)=\mathcal{N}(0,\Sigma_0)$. We consider multiple types of receivers, each having a quadratic cost $v_r^\theta(\omega,u)$, which depends on both the state $\omega$ and their own action $u\in\mathbb{R}^d$. Formally, let $v_r^\theta(\omega,u)=\|Q\omega+Ru\|^2$ with the parameters $\theta=\{Q\in\mathbb{R}^{r\times d},R\in\mathbb{R}^{r\times d}\}$ belonging to a bounded set $\max\{\|Q\|_F,\|R\|_F\}\leq\kappa$. We further assume that $R^\top R$ is invertible, i.e., $R$ is of full rank. On the other hand, the sender's cost $v_s$, in addition to being quadratic, is affected by the signaling policy that it selects. More precisely, $v_s(\eta)=(1-\beta)v(\omega, u)+\beta c(\eta)$, where $v(\omega,u)=\|Q_S\omega+R_Su\|^2$ and $\eta\in\Pi$ is a function from the policy space $\Pi$, which is a set of all Borel measurable functions from $\mathbb{R}^d$ to $\mathbb{R}^d$, $\eta$ is the sender's chosen signaling scheme, $y=\eta(\omega)$ is the signal realization, and $c(\eta)$ is the cost for choosing a signaling scheme, and $\beta$ is a fixed weight parameter that takes a value in $[0,1)$. In particular, if we introduce $h$, an  $\alpha$-strongly concave function of error covariance $\Sigma_e=\mathbb{E}[ee^\top]$ over its domain, with bounded gradients, where $e=\omega- \hat{\omega}$ and $\hat{\omega}=\mathbb{E}[\omega|y]$, then $c(\eta)=-h(\Sigma_e)$. Since $h$ is a concave function of $\Sigma_e$, higher information revelation is more costly for the sender. We assume that the gradient of $h(\Sigma_e)$ is bounded, i.e., we have $\|\nabla h(\Sigma_e)\|_F\leq G'.$ We provide the value of $G'$ as we consider specific $h(\Sigma_e)$ functions in the following sections. A single stage persuasion game proceeds as follows: 
\begin{itemize}
    \item The sender commits to a signaling policy $\eta(\cdot)$. 
    \item A state $\omega\sim\mathcal{N}(0,\Sigma_0)$ is sampled and the signal realization $y=\eta(\omega)$ is revealed to the receiver.
    \item The receiver takes a cost minimizing action. While the receiver incurs a cost of $v_r^\theta(\omega,u)=\|Q\omega+Ru\|^2$, the sender has a cost $(1-\beta) \|Q_S\omega+R_Su\|^2+\beta c(\eta)$, which is a convex combination of sender's actual cost function and the cost of signaling. 
\end{itemize}
\vspace{-0.2cm}
\subsection{Solving the Static Game}\label{subsec: solvestatic}
\vspace{-0.1cm}
A general procedure to solve the above static game for a fixed and known receiver type $\theta=\{Q,R\}$ is to pose it in the space of posterior covariances. This can be done by considering the receiver's best response. Since receiver's cost is convex, for a fixed sender's strategy, using the first-order condition, the best response of the receiver is given by $u=-\Lambda\hat{\omega}$, where $\Lambda=(R^\top R)^{-1}R^\top Q$, and $\hat{\omega}=\mathbb{E}[\omega|y]$ is the posterior after observing $y$. Considering the receiver's best response, in order to obtain the minimum cost, the sender needs to solve the following optimization problem: 
\begin{align}\label{eqn_obj_fun}
    \min_{\eta}(1-\beta)\mathbb{E}[\|Q_s\omega-R_s(R^\top R)^{-1}R^\top Q\hat{\omega}\|^2]+\beta c(\eta).\!\!
\end{align}
Due to the law of iterated expectations, we have $\mathbb{E}[\hat{\omega}^\top\Lambda^\top R_s^\top Q_s\omega]\!=\!\mathbb{E}[\hat{\omega}^\top\Lambda^\top R_s^\top Q_s\mathbb{E}[\omega|y]]\!=\!\mathbb{E}[\hat{\omega}^\top\Lambda^\top R_s^\top$ $Q_s\hat{\omega}]$. Finally, using the fact that $x^\top x \!= \!Tr(x^\top x\!) \!= \!Tr(x x^\top\!)$, we can rewrite the optimization problem in (\ref{eqn_obj_fun}) as 
\begin{align*}
\min_{\eta}(1\!-\!\beta)\mathbb{E}[\omega^\top Q_s^\top Q_s\omega]\!+\! (1\!-\!\beta)Tr(V\mathbb{E}[\hat{\omega}\hat{\omega}^\top])\!-\!\beta h(\Sigma_e),
\end{align*}
where $V=\Lambda^\top R_s^\top R_S\Lambda-\Lambda^\top R_S^\top Q_S-Q_S^\top R_S\Lambda$. 
Note that the first term, i.e.,  $\mathbb{E}[\omega^\top Q_s^\top Q_s\omega]$, is independent of sender's strategy. Since optimizing over the space of $\eta$ might be an intractable functional optimization problem, we use the following lemma to pose this problem as a Semi-definite Program (SDP).
\vspace{-0.6em}
\begin{lemma}\label{lem: etasig}\cite{9222530}
     Any signaling policy $\eta$ results in a posterior covariance $\Sigma$ satisfying $\Sigma_0\succeq\Sigma\succeq O$. Furthermore, any positive semi-definite matrix satisfying $\Sigma_0\succeq\Sigma\succeq O$ can be induced by a policy $\eta\in\Pi$. In particular, linear policies with noise are sufficient to induce any such matrix $\Sigma$ as posterior covariance. 
\end{lemma}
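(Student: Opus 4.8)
The plan is to handle the two directions separately: first show that every policy's induced posterior covariance lies in the order interval $[O,\Sigma_0]$, and then give an explicit linear-plus-noise construction realizing any prescribed matrix in that interval. Throughout, read ``posterior covariance $\Sigma$'' as $\mathbb{E}_y[\mathrm{Cov}(\omega\mid y)]$; for the linear-Gaussian signals used below this conditional covariance does not depend on $y$, so $\Sigma$ is literally the posterior covariance, while in general $\Sigma=\Sigma_0-\mathbb{E}[\hat\omega\hat\omega^\top]$ with $\hat\omega=\mathbb{E}[\omega\mid y]$ (recall $\mathbb{E}[\hat\omega]=0$).

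\textbf{Necessity.} The bound $\Sigma\succeq O$ is immediate, since every covariance matrix is psd. For the upper bound, apply the law of total covariance, $\Sigma_0=\mathbb{E}_y[\mathrm{Cov}(\omega\mid y)]+\mathrm{Cov}(\mathbb{E}[\omega\mid y])=\Sigma+\mathbb{E}[\hat\omega\hat\omega^\top]$. Since $\mathbb{E}[\hat\omega\hat\omega^\top]\succeq O$, this gives $\Sigma_0-\Sigma=\mathbb{E}[\hat\omega\hat\omega^\top]\succeq O$, i.e.\ $\Sigma\preceq\Sigma_0$. The same identity shows the posterior-mean covariance $\mathbb{E}[\hat\omega\hat\omega^\top]$, which is the object actually entering the SDP objective via $Tr(V\,\mathbb{E}[\hat\omega\hat\omega^\top])$, likewise ranges over $[O,\Sigma_0]$, and is in bijection with $\Sigma$.

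\textbf{Sufficiency (explicit construction).} Fix a target $\Sigma$ with $O\preceq\Sigma\preceq\Sigma_0$; without loss of generality $\Sigma_0\succ O$, else restrict everything to $\mathrm{range}(\Sigma_0)$, on which $\omega$ is supported almost surely. Whiten and diagonalize: set $B=\Sigma_0^{1/2}$ and pick an orthogonal $U$ with $B^{-1}\Sigma B^{-1}=U\Lambda U^\top$, $\Lambda=\mathrm{diag}(\lambda_1,\dots,\lambda_d)$; here $\lambda_i\in[0,1]$ because $O\preceq B^{-1}\Sigma B^{-1}\preceq I$. In the coordinates $z=U^\top B^{-1}\omega\sim\mathcal{N}(0,I)$, propose the coordinatewise signal $y_i=z_i+\nu_i$ with independent $\nu_i\sim\mathcal{N}(0,\sigma_i^2)$, taking $\sigma_i^2=\lambda_i/(1-\lambda_i)$ for $\lambda_i\in(0,1)$, $\sigma_i^2=0$ (fully reveal $z_i$) for $\lambda_i=0$, and $y_i\equiv 0$ (reveal nothing) for $\lambda_i=1$. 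A one-line scalar Gaussian-conditioning computation gives posterior variance $\sigma_i^2/(1+\sigma_i^2)=\lambda_i$ for each $z_i$, hence posterior covariance $\Lambda$ for $z$; undoing the two linear changes of variables, the overall map $\eta(\omega)=U^\top B^{-1}\omega+\nu$ (linear in $\omega$ plus independent Gaussian noise, realized e.g.\ by augmenting the state) is Borel measurable, lies in $\Pi$, and induces posterior covariance $BU\Lambda U^\top B=\Sigma$.

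\textbf{Main obstacle.} There is no deep obstruction here; the only care needed is bookkeeping: ensuring the posterior covariance is a genuinely deterministic single matrix for the linear-Gaussian signals (so ``$\Sigma$'' in the statement is unambiguous), and dispatching the degenerate cases $\lambda_i\in\{0,1\}$ and singular $\Sigma_0$ without dividing by zero --- both handled by the coordinate decomposition and by passing to $\mathrm{range}(\Sigma_0)$. Everything else reduces to the standard Gaussian conditioning formula, so one could equally well invoke \cite{9222530} directly.
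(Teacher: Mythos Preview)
Your proof is correct and follows essentially the same route as the paper: necessity via the law of total covariance (the paper phrases it as orthogonality of the estimation error and $\hat\omega$), and sufficiency by whitening $\omega$, diagonalizing the target in those coordinates, and building a coordinate-wise linear-plus-noise signal with noise variances matched to the target eigenvalues. One caveat worth flagging: in the paper's usage $\Sigma=\mathbb{E}[\hat\omega\hat\omega^\top]$ is the posterior-\emph{mean} covariance rather than the conditional error covariance you adopt, but since you explicitly note the bijection $\Sigma\leftrightarrow\Sigma_0-\Sigma$ and handle the degenerate cases $\lambda_i\in\{0,1\}$ and singular $\Sigma_0$ more carefully than the paper does, this causes no trouble.
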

\vspace{-0.5em}
For completeness, we provide a simplified proof of Lemma~\ref{lem: etasig} in Appendix~A. As a result of Lemma~\ref{lem: etasig}, we can restrict our attention to policies of the form $\eta(\omega)=L\omega+w$, where $w\sim\mathcal{N}(0,\Sigma_w)$ is an independent Gaussian noise with a variance $\Sigma_w\succeq O$ chosen by the sender. 
Furthermore, since the prior is Gaussian, using linear signaling policies results in an error that has a zero-mean Gaussian distribution with a variance given by $\mathbb{E}[ee^\top]=\Sigma_e=\Sigma_0-\Sigma$, where $\Sigma=\mathbb{E}[\hat{\omega}\hat{\omega}^\top]$. Thus, finding the sender's optimal strategy is a convex optimization problem given by
\begin{align}\label{Eq: cvxstatic}
    \min_{\Sigma_0\succeq \Sigma_e\succeq O} J(\Sigma_e)=-(1-\beta)Tr(V\Sigma_e)-\beta h(\Sigma_e).
\end{align}
Further, since $h(\Sigma_e)$ is assumed to be strongly concave, and $Tr(V\Sigma_e)$ is a linear function of $\Sigma_e$, $J(\Sigma_e)$ is a strongly convex function with respect to $\Sigma_e$ for every $\beta>0$. Finally, choosing $\beta=0$ reduces the problem to the traditional Bayesian Persuasion. Since $J(\Sigma_e)$ is convex with respect to $\Sigma_e$ for all $\beta$ values (strongly convex when $\beta>0$), and the set $\Sigma_0\succeq \Sigma_e\succeq O$ is also convex, the optimization problem provided in (\ref{Eq: cvxstatic}) is a convex optimization problem. To illustrate this reformulation, we give the following example. 
\vspace{-1.6em}
\begin{example}  As described in the introduction, consider an information disclosure game between an online platform and a single advertiser. For a fixed incentive scheme, let the cost of both these players depend on the slot for which the information is being disclosed. For simplicity, let the slot be characterized by properties $\omega=[z,\theta_A]$ denoting the possible click-through rate ($z\in\mathbb{R}$), and the general quality of content it is placed with ($\theta_A\in\mathbb{R}$), respectively. Let the action $u$ of the advertiser be the price being quoted for the slot. While the platform might care more about the click-through rates of advertisements, the advertiser places different priorities on their reputation and click rates. This can be captured by assuming that the cost for the platform would be  $v_s(\omega,u)=\mathbb{E}[( z+0.5\theta_A-u)^2]$ while the advertiser has a cost given by $v_r(\omega,u)=\mathbb{E}[(\alpha_z z+\alpha_A\theta_A-u)^2]$ where $\alpha_z$ and $\alpha_A$ are some constants determined by advertiser types. The platform needs to figure out an optimal way in which the statistics or features of the slot need to be packaged to the advertiser so that its own cost is minimized. Utilizing the framework developed above in this situation results in
\begin{small}
\begin{align}\nonumber\\[-15pt]
 V = \begin{bmatrix}
\alpha_z^2-2\alpha_z & \alpha_z\alpha_A- 0.5\alpha_z-\alpha_A \\
\alpha_z\alpha_A- 0.5\alpha_z-\alpha_A & \alpha_A^2-\alpha_A 
\end{bmatrix}.   
\end{align}    
\end{small}
\begin{align*}
    \\[-50pt]
\end{align*}
\end{example}
In the next subsection, we take $c(\eta)$ to be the entropy function to characterize the signaling cost and provide a solution for this specific setting.
\subsection{Entropy Regularizer}\label{Sec: Entropyreg}
In the previous subsection, we formulated the optimization problem where the cost of sending information is represented by a general strongly concave function $h(\Sigma_e)$ (or equivalently, $-c(\eta)$ which is in terms of the chosen policy $\eta$). In this subsection, we consider $c(\eta)=-\mathbb{E}_\eta[-\log(f(e))]$ as the cost of selecting the policy $\eta$ where $f(\cdot)$ is the error distribution as a result of choosing policy $\eta$. In other words, $c(\eta)=-\mathbb{E}_\eta[-\log(f(e))]$ is the well-known differential entropy function, and thus it provides a measure of uncertainty in estimating the state. 

We now use the following lemma to prove that linear policies are indeed optimal in such a setting. Hence, the techniques from the previous section can be utilized to form an SDP to solve for the optimal sender's strategy.
\vspace{-0.7em}
\begin{lemma}\label{Lem: linopt}
   When $c(\eta)$ is chosen to be the differential entropy function, i.e., $c(\eta)=-\mathbb{E}_\eta[-\log(f(e))]$, then using linear plus noise policies, i.e., $\eta(\omega)=L\omega+w$ where $w\sim\mathcal{N}(0,\Sigma_w)$ are indeed optimal. 
\end{lemma}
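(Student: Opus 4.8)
\emph{Proof strategy.} The plan is to split the sender's cost into a part that sees the policy $\eta$ only through the error covariance $\Sigma_e=\mathbb{E}[ee^\top]$ and a part equal to $-\beta$ times the differential entropy of the error, and then to show that replacing an arbitrary policy by the linear‑plus‑noise policy of Lemma~\ref{lem: etasig} that induces the \emph{same} $\Sigma_e$ cannot increase the cost. First I would redo the reduction of Section~\ref{subsec: solvestatic} \emph{without} assuming linearity. For any Borel policy $\eta$, $\hat\omega=\mathbb{E}[\omega|y]$ is zero‑mean and, by the tower property, satisfies $\mathbb{E}[\hat\omega e^\top]=O$ (since $\mathbb{E}[\hat\omega\omega^\top]=\mathbb{E}[\hat\omega\hat\omega^\top]$); hence $\Sigma_0=\mathbb{E}[\hat\omega\hat\omega^\top]+\Sigma_e$, so $(1-\beta)\mathbb{E}[\omega^\top Q_s^\top Q_s\omega]+(1-\beta)Tr\big(V\,\mathbb{E}[\hat\omega\hat\omega^\top]\big)=(1-\beta)Tr(Q_s^\top Q_s\Sigma_0)+(1-\beta)Tr\big(V(\Sigma_0-\Sigma_e)\big)$, a function of $\Sigma_e$ alone. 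Writing $H_\eta(e)$ for the differential entropy of the error under $\eta$ (so that here $c(\eta)=-H_\eta(e)$), the total cost is $J(\eta)=\text{const}-(1-\beta)Tr(V\Sigma_e)-\beta\,H_\eta(e)$.

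Second, I would invoke the maximum‑entropy principle: among all distributions on $\mathbb{R}^d$ with a fixed nonsingular covariance $\Sigma_e$, the Gaussian $\mathcal{N}(0,\Sigma_e)$ uniquely maximizes differential entropy, so $H_\eta(e)\le \tfrac12\log\big((2\pi e)^d\det\Sigma_e\big)$, with equality iff $e\sim\mathcal{N}(0,\Sigma_e)$; write $h(\Sigma_e)\defeq\tfrac12\log\big((2\pi e)^d\det\Sigma_e\big)$ for this Gaussian value. Given an arbitrary $\eta$ with error covariance $\Sigma_e$ — which satisfies $\Sigma_0\succeq\Sigma_e\succeq O$ by Lemma~\ref{lem: etasig} — Lemma~\ref{lem: etasig} also supplies a linear‑plus‑noise policy $\eta'(\omega)=L\omega+w$, $w\sim\mathcal{N}(0,\Sigma_w)$, inducing posterior covariance $\Sigma_0-\Sigma_e$ and hence the same error covariance $\Sigma_e$; since the prior is Gaussian, under $\eta'$ the error is \emph{exactly} $\mathcal{N}(0,\Sigma_e)$, so $H_{\eta'}(e)=h(\Sigma_e)$ (in particular the entropy term is a genuine function of $\Sigma_e$ for such policies). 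Because $\beta\ge 0$ and the quadratic part of $J$ depends only on $\Sigma_e$, we get $J(\eta)-J(\eta')=\beta\big(h(\Sigma_e)-H_\eta(e)\big)\ge 0$; as $\eta$ was arbitrary, the infimum of $J$ over $\Pi$ is attained within the class of linear‑plus‑noise policies, which is the claim, and this reduces the sender's problem to the SDP~(\ref{Eq: cvxstatic}) with $h(\Sigma_e)=\tfrac12\log\big((2\pi e)^d\det\Sigma_e\big)$.

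The points that need care are technical rather than conceptual. The maximum‑entropy comparison must be treated at the boundary of the feasible set: if $\eta$ makes $e$ singular or without a density then $H_\eta(e)=-\infty$ and $J(\eta)\ge J(\eta')$ is immediate, whereas on the interior one uses the standard relative‑entropy proof of the Gaussian bound; relatedly, one should restrict $h$ to the convex set $\{\Sigma_0\succeq\Sigma_e\succeq\epsilon I\}$ for a small $\epsilon>0$, which is exactly what keeps $\|\nabla h(\Sigma_e)\|_F=\tfrac12\|\Sigma_e^{-1}\|_F$ bounded and lets one read off $G'$. I expect the only real obstacle to be making the pointwise domination $J(\eta)\ge J(\eta')$ fully rigorous across all admissible $\eta$ without integrability gaps; once that is done, optimality of linear‑plus‑noise policies — and therefore the SDP reformulation used in the sequel — follows at once.
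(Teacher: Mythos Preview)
Your proposal is correct and follows essentially the same approach as the paper: decompose the sender's cost into a quadratic part that depends on $\eta$ only through $\Sigma_e$, invoke Lemma~\ref{lem: etasig} to produce a linear-plus-noise policy with the same $\Sigma_e$, and then use the fact that the Gaussian maximizes differential entropy for a fixed covariance to conclude $v_s(\zeta)\ge v_s(\eta)$. Your write-up is somewhat more careful about edge cases (singular $\Sigma_e$, integrability) than the paper's short argument, but the underlying idea is identical.
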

\vspace{-0.5em}
\begin{proof}
    Consider a general policy $\zeta$ which results in an error covariance $\Sigma_e$, resulting in a total cost of $v_s(\zeta)=(1-\beta)[Tr(Q_s^\top Q_s\Sigma_0)+Tr(V(\Sigma_0-\Sigma_e))]-\beta h_\zeta(\Sigma_e)$. Let us consider a linear plus noise policy $\eta(\omega)$ which results in the same error covariance $\Sigma_e$ (such a policy is possible because of Lemma \ref{lem: etasig}), and the total cost obtained from the policy is given by $v_s(\eta)=(1-\beta)(Tr(Q_s^\top Q_s\Sigma_0)+Tr(V(\Sigma_0-\Sigma_e)))-\beta h_\eta(\Sigma_e)$. As mentioned earlier, using linear policies results in an error that has a Gaussian distribution. It is well known that for a given covariance $\Sigma_e$, Gaussian distribution maximizes the entropy \cite{cover1991information}. Thus, while the term $Tr(V(\Sigma_0-\Sigma_e))$ in $v_s(\eta)$ and $v_s(\zeta)$ is the same for both policies, the last term in the total cost function is minimized when we use linear policies, i.e., $ h_\eta(\Sigma_e)\geq h_\zeta(\Sigma_e)$. Thus, the total cost for policy $\eta$ is $v_s(\zeta)\geq v_s(\eta)$.
\end{proof}
Due to Lemma \ref{Lem: linopt}, when the cost of signaling is determined by the differential entropy function, i.e., $c(\eta)=-\mathbb{E}_\eta[-\log(f(e))]$, we can restrict our attention to linear plus noise signaling policies which are in the form of $\eta(\omega) = L \omega + w$. Using this class of policies results in a Gaussian distribution for the error, and thus the total cost is given by $J(\Sigma_e)=-(1-\beta)Tr(V\Sigma_e)-\beta\frac{1}{2}\log((2\pi e)^d det(\Sigma_e))$ by noting that $h(\Sigma_e)= \frac{1}{2}\log((2\pi e)^d det(\Sigma_e))$, which is the differential entropy for a Gaussian distribution with zero mean and covariance $\Sigma_e$ \cite{cover1991information}. Next, we prove that the differential entropy for a Gaussian distribution is strongly convex in the following sense.
\vspace{-0.4cm}
\begin{definition} A function $f:\mathcal{K}\rightarrow \mathbb{R}$ is $\alpha$-strongly convex with respect to the Frobenius norm if for any $X,Y\in \mathcal{K}$
\begin{align*}\\[-20pt]
    f(X)\geq f(Y)-Tr(\nabla f(X)(X-Y))+\frac{\alpha}{2}\|X-Y\|_F^2,
\end{align*}
\begin{align*}
    \\[-50pt]
\end{align*}
or equivalently, for any $t\in[0,1]$ we have
\begin{align*}\\[-20pt]
    \!f(tX\!+\!(1\!-\!t)Y)\!\leq \!tf(\!X\!)\!+\!(1\!-\!t)f(Y)\!-\!\frac{\alpha}{2}t(1\!-\!t)\|X\!-\!Y\|_F^2.
\end{align*}
\begin{align*}
    \\[-50pt]
\end{align*}
\end{definition}
We say $f$ is $\alpha$-strongly concave if $-f$ is $\alpha$-strongly convex.
\vspace{-0.9em}
\begin{lemma}
    When the cost of signaling is equal to the differential entropy function, i.e., $c(\eta)=-\mathbb{E}_\eta[-\log(f(e))]$, under linear policies, $J(\Sigma_e)$ is $\alpha$-strongly convex with respect to the Frobenius-norm over the domain $\!\mathscr{K}\!=\!\{\Sigma_e| \Sigma_0\!\succeq\!\Sigma_e\!\succeq\! O\!\}$.
\end{lemma}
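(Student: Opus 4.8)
The plan is to isolate where the curvature of $J$ actually comes from. Writing
$J(\Sigma_e)=-(1-\beta)Tr(V\Sigma_e)-\tfrac{\beta}{2}\log\!\big((2\pi e)^d det(\Sigma_e)\big)
=\big[-(1-\beta)Tr(V\Sigma_e)-\tfrac{\beta d}{2}\log(2\pi e)\big]-\tfrac{\beta}{2}\log det(\Sigma_e)$,
the bracketed part is affine in $\Sigma_e$ and contributes nothing to second-order behavior, so it suffices to show that $g(X):=-\log det(X)$ is $\gamma$-strongly convex on $\mathscr{K}=\{\Sigma_e:\Sigma_0\succeq\Sigma_e\succeq O\}$ for some $\gamma>0$; then $J$ is $\alpha$-strongly convex with $\alpha=\tfrac{\beta\gamma}{2}$ (taking $\Sigma_0\succ O$ so that $\mathscr{K}$ has nonempty relative interior). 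I would establish the bound on the relative interior of $\mathscr{K}$ via the second-order characterization of strong convexity and then extend to the boundary separately.

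First I would compute the second differential of $g$: for $X\succ O$ and a symmetric direction $H$, $Dg(X)[H]=-Tr(X^{-1}H)$ and $D^2g(X)[H,H]=Tr\!\big(X^{-1}HX^{-1}H\big)=\big\|X^{-1/2}HX^{-1/2}\big\|_F^2$, where $X^{-1/2}$ is the psd square root of $X^{-1}$. The core estimate is a uniform lower bound on this quantity. Using the submultiplicativity-type inequality $\|AB\|_F\ge\sigma_{\min}(A)\|B\|_F$ twice, $\big\|X^{-1/2}HX^{-1/2}\big\|_F\ge\sigma_{\min}(X^{-1/2})^2\|H\|_F=\lambda_{\min}(X^{-1})\|H\|_F=\lambda_{\max}(X)^{-1}\|H\|_F$, hence $D^2g(X)[H,H]\ge\lambda_{\max}(X)^{-2}\|H\|_F^2$. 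Now the constraint $\Sigma_e\preceq\Sigma_0$ gives $\lambda_{\max}(\Sigma_e)\le\lambda_{\max}(\Sigma_0)$, so $D^2g(\Sigma_e)[H,H]\ge\lambda_{\max}(\Sigma_0)^{-2}\|H\|_F^2$ uniformly over the relative interior of $\mathscr{K}$. Integrating the second-order Taylor remainder along line segments yields that $g$ is $\lambda_{\max}(\Sigma_0)^{-2}$-strongly convex there, and therefore $J$ is $\alpha$-strongly convex with $\alpha=\dfrac{\beta}{2\,\lambda_{\max}(\Sigma_0)^2}$.

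It remains to cover the boundary, where $\Sigma_e$ is singular and $-\log det(\Sigma_e)=+\infty$ (and the gradient appearing in the stated definition is undefined). Here I would fall back on the equivalent midpoint form of strong convexity: if an endpoint $X$ or $Y$ lies on the boundary then $tf(X)+(1-t)f(Y)=+\infty$ and the inequality $f(tX+(1-t)Y)\le tf(X)+(1-t)f(Y)-\tfrac{\alpha}{2}t(1-t)\|X-Y\|_F^2$ holds trivially, while for interior endpoints it was already proved; a standard approximation/lower-semicontinuity argument (approaching boundary points from the interior) confirms consistency.

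\textbf{Main obstacle.} The delicate step is the \emph{uniform} positive lower bound on the Frobenius-norm curvature $\big\|\Sigma_e^{-1/2}H\Sigma_e^{-1/2}\big\|_F^2$ over all of $\mathscr{K}$: this is precisely where the upper constraint $\Sigma_e\preceq\Sigma_0$ is indispensable, since without it $\lambda_{\max}(\Sigma_e)$ is unbounded and $-\log det$ is only strictly convex, not strongly convex. The boundary/singularity issue is comparatively minor and is dispatched through the midpoint definition rather than the gradient-based one.
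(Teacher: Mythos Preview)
Your argument is correct, but it follows a genuinely different route from the paper's. The paper does not touch the Hessian of $-\log\det$ at all; instead it proceeds information-theoretically. From the diagonal bound $(\Sigma_e)_{ii}\le(\Sigma_0)_{ii}\le\gamma$ it extracts an entry $(i,j)$ for which $|X_{ij}-Y_{ij}|$ is comparable to $\|X-Y\|_F/(4\gamma d^{3/2})$ times a sum of diagonal entries, then invokes a result of Christiano to conclude that the total variation distance between $\mathcal{N}(0,X)$ and $\mathcal{N}(0,Y)$ is $\Omega\bigl(\|X-Y\|_F/(\gamma d^{3/2})\bigr)$, and finally applies a lemma of Moridomi et al.\ that converts a TV lower bound into the midpoint strong-convexity inequality for $-\log\det$; this produces $\alpha=\tfrac{1}{1152\gamma^2 d^3\sqrt{e}}$. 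Your second-order computation $D^2(-\log\det)(X)[H,H]=\|X^{-1/2}HX^{-1/2}\|_F^2\ge\lambda_{\max}(X)^{-2}\|H\|_F^2$, combined with $\lambda_{\max}(\Sigma_e)\le\lambda_{\max}(\Sigma_0)$, is both more elementary (no external black boxes) and yields a strictly sharper constant $\alpha=\tfrac{\beta}{2\lambda_{\max}(\Sigma_0)^2}$, since $\lambda_{\max}(\Sigma_0)\le d\gamma$. You also explicitly dispose of the singular-boundary case via the midpoint form of strong convexity, which the paper glosses over; the paper's TV-based argument does implicitly work at the boundary, but this is not spelled out there.
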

\vspace{-0.5em}
\begin{proof}
    Since $\Sigma_e\!\!\in\!\!\mathscr{K}$, we have $(\Sigma_e)_{ii}\!\leq\!(\Sigma_0)_{ii}$, where $(\cdot)_{ij}$ denotes the element at the $i$th row and $j$th column of matrix $(\cdot)$. Let $\max_i (\Sigma_0)_{ii}=\gamma$, which implies that $(\Sigma_0)_{ii}\leq\gamma$ for all $i\in[d]$. Now, consider arbitrary matrices $X,Y\in\mathscr{K}$. We have 
    \begin{align*}\\[-20pt]
        \|X-Y\|_F&\leq\sqrt{d}\|X-Y\|_2\leq d^{\frac{3}{2}}\max_{i,j}\|X_{i,j}-Y_{i,j}\|.
    \end{align*}
    \begin{align*}
    \\[-50pt]
\end{align*}
Thus, there exists a pair $(i,j)$ with $i\in[d],  j\in[d]$, such that
\begin{align*}\\[-20pt]
    \|X_{i,j}-Y_{i,j}\|\geq\frac{\|X-Y\|_F}{4\gamma d^{\frac{3}{2}}}(X_{i,i}+Y_{i,i}+X_{j,j}+Y_{j,j}).
\end{align*}
    \begin{align*}
    \\[-47pt]
\end{align*}
By using \cite{christiano2014online}, this implies that the total variation distance between two Gaussian distributions with covariances $X$ and $Y$ is  $\Omega\big(\frac{\|X-Y\|_F}{4\gamma d^{\frac{3}{2}}}\big)$. Hence using \cite[Lemma 5.3 ]{moridomi2018online}, we obtain
 \begin{align*}\\[-20pt]
    -\log &\ det (tX+(1-t)Y)\leq-t \log det(X) \nonumber\\&-(1\!-\!t) \log det Y\!-\!\frac{t(1-t)}{2}\frac{1}{1152\gamma^2d^3\sqrt{e}} \|X-Y\|^2_F,
\end{align*}
    \begin{align*}
    \\[-47pt]
\end{align*}
which means that $-\log det$ is strongly convex. Thus, under linear policies with the differential entropy cost function,  $J(\Sigma_e)$ is an $\alpha$-strongly convex function with respect to the Frobenius norm, where $\alpha=\frac{1}{1152\gamma^2n^3\sqrt{e}}$.
\end{proof}
In the next section, we will utilize the strong convexity property of the total cost function, where we consider the online version of the problem, i.e., when the sender is uncertain about the receiver's type.

\vspace{-0.15cm}
\section{Online Linear Quadratic Persuasion}\label{Sect:Online_pers}
\vspace{-0.05cm}
Consider an online setting of the information disclosure problem in which the sender plays a repeated game with a stream of receivers of unknown types selected from the set $\max\{\|Q\|_F,\|R\|_F\}\leq\kappa$. The sender has access to a state that is randomly sampled from a zero mean Gaussian distribution at each round, i.e., $\omega_t\sim\mathcal{N}(0,\Sigma_0)$. Here, the sender commits to a signaling policy $\eta_t$ (which is possibly stochastic) without knowing the receiver's type. Then, after observing the sampled state, it sends a signal $y_t=\eta_t(\omega_t)$ to the receiver. The receiver then selects an action $u_t=\delta_t(y_t)$ that minimizes its cost. We are interested in two types of settings: (a) full feedback and (b) bandit feedback. In the first setting, after the sender selects its policy at time $t$, the receiver's type $\theta_t=\{Q^t,R^t\}$ is revealed to the sender. On the other hand, in the bandit feedback setting, only the scalar value of the sender's cost is revealed; hence, the sender does not observe the receiver's type. Further, we assume that the receiver's sequence of types is chosen adversarially before the game starts. In this setting, we are interested in developing an online algorithm that finds a signaling policy at round $t$ based on the information available until current time $t$. The performance measure for such an algorithm is given by a notion of regret defined as
\vspace{-0.5cm}
\begin{align}\label{eqn:regret}
R_T(\mathcal{A})\!=\!\!\sup_{\{\theta_1,\hdots,\theta_T\}}\!\!\Big\{\sum_{t=1}^T\mathbb{E}[v_s(\eta_t^\mathcal{A},\theta_t)]-\min_{\eta} \sum_{t=1}^Tv_s(\eta,\theta_t)\Big\},
\end{align}
where $\eta_t^\mathcal{A}=\mathcal{A}(\theta_1,\hdots,\theta_{t-1})$ is the output of algorithm $\mathcal{A}$ and $T$ denotes the total time. In the following subsection, we first consider the full information feedback setting, where the sender observes the receiver's type at the end of each round. Ideally, we would like to develop an algorithm whose regret grows sublinearly in $T$ while having a per-round running time that is polynomial in the problem parameters. We show that this is indeed possible in the following sections. 
\subsection{Full Information Feedback}
Developing an efficient algorithm for minimizing the regret function given in (\ref{eqn:regret}) directly is computationally intractable. Thus, we utilize techniques from Section \ref{subsec: solvestatic} to first reformulate the problem into an online convex optimization framework to provide a black-box reduction to standard online algorithms.
\begin{algorithm}[t]
\begin{small}
\caption{Online Gradient Descent for Bayesian Persuasion with Differential Entropy Cost}\label{alg:ogd_1}
\begin{algorithmic}[1]
\State{\textbf{Input:} Convex set $\epsilon,\Sigma_0,T,\Sigma_1\in\mathscr{K}$, where $\mathscr{K}=\{\Sigma|\Sigma_0\succeq\Sigma\succeq O\}$, with step sizes $\{\rho_t\}$.}
\State{\textbf{for $t=1$ to $T$ do}}
 \State{\quad\textbf{Choose}  $L_t,w_t$ such that $\Sigma_t\!=\!\Sigma_0L_t(L_t\Sigma_0L_t\!+\!\Sigma_{v,t})^{\dagger}L_t\Sigma_0$ \par and observe $\{Q_t,R_t\}$, where $\Sigma_t=\Sigma_0-\Sigma_{e,t}$.}
 \State{\quad\textbf{Update:}   $\Sigma'_{e,t+1}=\Sigma_{e,t}+\rho_t((1-\beta)V_t+\beta(\Sigma_{e,t}+\epsilon I)^{-1})$};
\State{\quad \textbf{Project:} $\Sigma_{e,t+1}=\Pi_{\mathscr{K}}(\Sigma'_{e,t+1})$;}
\State{\textbf{end}}
\end{algorithmic}
\end{small}
\end{algorithm}
 For simplicity of illustration, we restrict attention to the signaling cost inspired by entropy as stated in Section \ref{Sec: Entropyreg}. In particular, let $c(\eta)=-\ln(det(\Sigma_e+\epsilon I))$ for some $\epsilon>0$, which is a smoother version of differential entropy to account for the difference between entropy and differential entropy (as in \cite{christiano2014online}). Therefore, $J(\Sigma_e)=-(1-\beta)Tr(V\Sigma_e)-\beta \ln(det(\Sigma_e+\epsilon I))$. This is a strongly convex objective, and hence, online gradient descent \cite{zinkevich2003online} can provably achieve sub-linear regret. By utilizing Lemma~\ref{lem: etasig}, we can design linear plus noise policies to achieve $\Sigma_t$ at each round. This is summarized in Algorithm~\ref{alg:ogd_1}. We use the following lemma to state the regret guarantees for Algorithm~\ref{alg:ogd_1}.
 \begin{lemma}\label{lem: ogdconvergence}
 Under the assumptions made, when $c(\eta)$ is $\alpha-$strongly convex, the expected regret of Algorithm~\ref{alg:ogd_1} when $\beta>0$, with step sizes $\rho_t=\frac{1}{\alpha\beta t}$ is at most
 \begin{align*}\\[-20pt]
     R_T\leq\frac{G^2}{2\beta\alpha}(1+\log T).\\[-20pt]
 \end{align*}
 \end{lemma}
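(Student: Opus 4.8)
The plan is to recognize Algorithm~\ref{alg:ogd_1} as online projected gradient descent applied to the sequence of strongly convex losses
\[
J_t(\Sigma_e) = -(1-\beta)\,Tr(V_t\Sigma_e) - \beta\,\ln\!\big(det(\Sigma_e + \epsilon I)\big)
\]
over the convex set $\mathscr{K}=\{\Sigma_e : \Sigma_0\succeq\Sigma_e\succeq O\}$, and then to invoke the classical logarithmic regret bound for OGD on strongly convex losses \cite{zinkevich2003online}. First I would carry out the reduction: by Section~\ref{subsec: solvestatic} together with Lemma~\ref{lem: etasig} and Lemma~\ref{Lem: linopt}, the sender's choice of $\eta_t$ at round $t$ is equivalent to a choice of posterior error covariance $\Sigma_{e,t}\in\mathscr{K}$, and the realized expected cost satisfies $\mathbb{E}[v_s(\eta_t,\theta_t)] = (1-\beta)Tr(Q_s^\top Q_s\Sigma_0) + J_t(\Sigma_{e,t})$, where the first term is independent of the sender's strategy. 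That strategy-independent term is common to $\eta_t^{\mathcal{A}}$ and to the offline optimum, so it cancels in~(\ref{eqn:regret}); hence $R_T = \sup_{\{\theta_t\}}\big\{\sum_{t=1}^T J_t(\Sigma_{e,t}) - \min_{\Sigma_e\in\mathscr{K}}\sum_{t=1}^T J_t(\Sigma_e)\big\}$, i.e., exactly an online convex optimization regret.

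Next I would verify the two ingredients the logarithmic bound requires, uniformly in $t$. Strong convexity: the term $-(1-\beta)Tr(V_t\Sigma_e)$ is linear and does not affect curvature, while $-\ln det(\Sigma_e+\epsilon I)$ is $\alpha$-strongly convex on $\mathscr{K}$ by the preceding lemma; therefore each $J_t$ is $\alpha\beta$-strongly convex whenever $\beta>0$. Bounded gradients: $\nabla J_t(\Sigma_e) = -(1-\beta)V_t - \beta(\Sigma_e+\epsilon I)^{-1}$, which is precisely (up to sign) the increment used in line~4 of the algorithm, and line~5 is the Frobenius-norm projection onto $\mathscr{K}$; since $\|Q^t\|_F,\|R^t\|_F\le\kappa$ and $R^\top R$ is invertible over the admissible type set, $\|V_t\|_F$ is bounded, and because $\epsilon>0$ we have $\|(\Sigma_e+\epsilon I)^{-1}\|_F\le\sqrt{d}/\epsilon$ on $\mathscr{K}$; collecting these yields $\|\nabla J_t(\Sigma_{e,t})\|_F\le G$ for the stated constant $G$.

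Then I would run the standard potential argument with $\Phi_t=\|\Sigma_{e,t}-\Sigma_e^\star\|_F^2$ for $\Sigma_e^\star\in\argmin_{\Sigma_e\in\mathscr{K}}\sum_t J_t(\Sigma_e)$. Nonexpansiveness of the projection and expanding the squared norm of the gradient step give
\[
Tr\!\big(\nabla J_t(\Sigma_{e,t})(\Sigma_{e,t}-\Sigma_e^\star)\big)\le \frac{\Phi_t-\Phi_{t+1}}{2\rho_t} + \frac{\rho_t G^2}{2},
\]
and combining with the $\alpha\beta$-strong convexity inequality $J_t(\Sigma_{e,t})-J_t(\Sigma_e^\star)\le Tr(\nabla J_t(\Sigma_{e,t})(\Sigma_{e,t}-\Sigma_e^\star)) - \tfrac{\alpha\beta}{2}\Phi_t$, summing over $t$, and substituting $\rho_t=\tfrac{1}{\alpha\beta t}$, the coefficients $\tfrac{1}{2\rho_t}-\tfrac{\alpha\beta}{2}=\tfrac{\alpha\beta(t-1)}{2}$ telescope against $-\tfrac{1}{2\rho_t}\Phi_{t+1}$ to leave only the nonpositive boundary term $-\tfrac{\alpha\beta T}{2}\Phi_{T+1}$. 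Hence $\sum_{t=1}^T(J_t(\Sigma_{e,t})-J_t(\Sigma_e^\star))\le \tfrac{G^2}{2\alpha\beta}\sum_{t=1}^T\tfrac1t\le \tfrac{G^2}{2\beta\alpha}(1+\log T)$, and since the bound does not depend on $\{\theta_t\}$, taking the supremum gives the claim. The main obstacle I anticipate is not the OGD analysis (which is routine) but justifying the two uniform bounds cleanly: that the cost reduction is exact after cancelling the strategy-independent term, so persuasion regret and OCO regret truly coincide, and that $\|V_t\|_F$ — which contains $((R^t)^\top R^t)^{-1}$ — is uniformly bounded over the type set, which is where the full-rank assumption on $R$ together with the smoothing parameter $\epsilon>0$ (controlling $\|(\Sigma_e+\epsilon I)^{-1}\|_F$) produces the finite constant $G$.
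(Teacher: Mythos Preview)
Your proposal is correct and follows essentially the same approach as the paper: reduce the persuasion regret to an online convex optimization regret over $\mathscr{K}$, verify that each $J_t$ is $\alpha\beta$-strongly convex with uniformly bounded gradients, and invoke the standard logarithmic regret bound for OGD. The paper's own proof is terser---it simply cites \cite{hazan2007logarithmic} for the convergence rate and devotes its effort to bounding $\|\nabla J_t\|_F$ via $\|V_t\|_F$ and $\|(R^\top R)^{-1}\|_F$---whereas you additionally spell out the regret-equivalence step and the telescoping potential argument; but the substance is the same.
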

 \begin{proof} If we show that the gradients in each round are bounded, the convergence rate follows from \cite{hazan2007logarithmic}. First, we note that for every $\Sigma_e\in\mathscr{K}$, we have $Tr(\Sigma_e)\leq Tr(\Sigma_0):=D$. Now we bound the norm of gradients which is given by $\nabla J_t(\Sigma_e)=-(1-\beta)V_t-\beta(\Sigma_e+\epsilon I)^{-1}$. For that, we bound $\|V_t\|$ as follows:
 \begin{align*} \\[-20pt] \|V_t\|&\leq\|\Lambda_t^\top\Lambda_t-\Lambda_t^\top Q_S-Q_S^\top\Lambda_t\|_F\\&\leq\|\Lambda_t^\top\Lambda_t\|_F+\|\Lambda_t^\top Q_S\|_F+\|Q_S^\top\Lambda_t\|_F\\&\leq\|\Lambda_t\|_F^2+2\|\Lambda_t\|_F\|Q_S\|_F,\\[-20pt]
 \end{align*}
 where $\|\Lambda_t\|_F=\|R_S(R^\top R)^{-1}R^\top Q\|_F$, which is bounded above by $\|\Lambda_t\|_F\leq \|R_S\|_F\|(R^\top R)^{-1}\|_F\|Q\|_F$. Let $R^\top R=K$. Then, we have
\begin{align*}\\[-20pt]
\|K^{-1}\|_F\!=\!\sqrt{Tr(K^{-1}K^{-1})}=\sqrt{\sum_i\frac{1}{\lambda_i^2}}\!\leq\!\sum_i\frac{1}{\lambda_i}\leq\frac{d}{\lambda_{\min}},\\[-20pt]
\end{align*}
where $\lambda_{\min}$ is the minimum eigenvalue of $R^\top R$. Since $R$ is assumed to be full rank, $\lambda_{\min}>0$. Therefore, $\frac{d}{\lambda_{\min}}$ is finite. Let $K'=\frac{\|R_S\|_Fd\kappa}{\lambda_{\min}}$. Thus, $\|\nabla J(\Sigma_e)\|_F\leq (1-\beta)(K'^2+2K'\|Q_S\|_F)+\beta\frac{D}{\epsilon}=G<\infty$.
 \end{proof} 
Although Lemma \ref{lem: ogdconvergence} guarantees a $\mathcal{O}(\log{T})$ convergence, this heavily relies on the fact that $c(\eta)$ is an $\alpha$-strongly convex function with bounded gradients. Further, the convergence falls back to $\mathcal{O}(\sqrt{T})$ when $\beta=0$ as it reduces to an online linear program. In our setting, $\mathcal{O}(\sqrt{T})$ convergence rate is tight. As an example, consider the simple scalar instance with $\beta=0$, $R_S=R=-1$, $Q_S=1$, and $Q$ is chosen from the set $\{1,1+\sqrt{2}\}$ uniformly at random at each time instant. This implies that $V_t=\{-1,1\}$. Letting $\Sigma_0=\sigma_0$, since any online algorithm would result in $\mathbb{E}[Tr(V_t\Sigma_t)]=0$, the regret would remain $\Omega(\sqrt{T})$. Furthermore, in each round, Algorithm~\ref{alg:ogd_1} requires the computation of a gradient, which in the case of differential entropy cost function requires a matrix inversion, which is polynomial in the instance size. Thus, computational complexity is $\mathcal{O}(d^3)$ to calculate the gradient. Although polynomial in time, this is super-linear in the dimension of the matrix. Algorithms without projection steps, such as online conditional gradient \cite{hazan2012projection} can be utilized to retain computational efficiency while compromising on the regret bound. 
 
In summary, motivated by the traditional Bayesian Persuasion literature, in this section, we have focused on the problem where both the sender and the receiver have quadratic objective functions, but the sender does not know the receiver's type. However, the sender learns to play an efficient strategy online. Note that Algorithm~\ref{alg:ogd_1} requires the sender to have full knowledge of $V_t$ at each time step, which is possible in the full feedback setting. In the following subsection, we use a similar reduction to online optimization to give efficient rates for the bandit feedback setting.
\subsection{Bandit Feedback }
In this subsection, we consider the case where the sender cannot observe the receiver's type at the end of each round, and instead, the sender only observes their cost as a result of their signaling policy at time $t$. Since the sender does not observe the receiver's type, in contrast to the full information feedback described in the previous subsection, the sender cannot calculate the gradients of their utility directly. Thus, here, we provide a way to apply the online gradient descent algorithm provided in the previous subsection to the bandit feedback setting. For that, we apply the FKM algorithm, which is provided in Algorithm~\ref{alg:ogd_2}, (\cite[Algorithm~23]{hazan2016introduction}). For convenience, we assume that $\Sigma_0\succeq I$. If not, we can make a transformation of the set $\mathscr{K}$ such that the transformed set will include the identity matrix $I$. If we use the smoother version of the differential entropy, the cost function is upper bounded by $|J(\Sigma_e)|=|(1-\beta)Tr(V\Sigma_e)+\beta \ln(det(\Sigma_e+\epsilon I))|\leq B$ for some finite $B$ as both terms have finite values. 

\begin{algorithm}[t]
\begin{small}
\caption{FKM Algorithm for Bayesian Persuasion with Differential Entropy Cost with Bandit Feedback}\label{alg:ogd_2}
\begin{algorithmic}[1]
\State{\textbf{Input:} Decision set $\mathscr{K}=\{\Sigma|\Sigma_0\succeq\Sigma\succeq O\}$, $\Sigma_0$, $T$, set $ \Sigma_1 = 0$ with step sizes $\{ \delta, \rho\}.$}
\State{\textbf{for $t=1$ to $T$ do}}
 \State{\quad\textbf{Draw} $U_t\in \mathbb{S}_1$ uniformly at random, set $\Sigma_t' =\Sigma_t + \delta U_t$.}
 \State{\quad\textbf{Play}  $\Sigma_t'$, observe and incur loss $J_t(\Sigma_t')$. Let $G_t = \frac{d^2}{\delta}J_t(\Sigma_t')U_t$.}
\State{\quad \textbf{Update} $ \Sigma_{t+1} =\Pi_{\mathscr{K}_\delta}(\Sigma_t- \rho G_t)$}
\State{\textbf{end}}
\end{algorithmic}
\end{small}
\end{algorithm}
In the following lemma, we state the regret bound as a result of applying Algorithm~\ref{alg:ogd_2} in the bandit feedback setting. 
\vspace{-1.6em}
\begin{lemma}\label{lemma_bandit}
    Under the above assumptions and the bandit feedback setting, the expected regret of Algorithm~\ref{alg:ogd_2}, with step sizes $\rho = \frac{D}{d^2 T^{\frac{3}{4}}}$, $\delta = \frac{1}{T^{\frac{1}{4}}}$ is bounded by 
    \begin{align*}\\[-20pt]
        R_{T,bandit} \leq 9d^2 DGB T^{\frac{3}{4}}. \\[-20pt]
    \end{align*}
\end{lemma}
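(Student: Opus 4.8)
The plan is to recognize Algorithm~\ref{alg:ogd_2} as an instance of the FKM one-point-feedback scheme of \cite[Ch.~6]{hazan2016introduction} and to run the standard bias--variance analysis, keeping track of the fact that the decision variable $\Sigma$ lives in the $d^2$-dimensional space of $d\times d$ matrices equipped with the Frobenius geometry (which is why the estimator carries $d^2/\delta$ rather than $d/\delta$). First I would introduce the smoothed loss $\hat J_t(\Sigma)=\mathbb{E}_{V}[J_t(\Sigma+\delta V)]$ with $V$ uniform on the unit Frobenius ball $\mathbb{B}_1$, and invoke the spherical-sampling identity $\mathbb{E}_{U\in\mathbb{S}_1}\!\big[\tfrac{d^2}{\delta}J_t(\Sigma+\delta U)\,U\big]=\nabla\hat J_t(\Sigma)$. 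This shows that, conditioned on $\Sigma_t$, the matrix $G_t$ formed in the algorithm is an unbiased estimate of $\nabla\hat J_t(\Sigma_t)$; moreover $\hat J_t$ inherits convexity and $G$-Lipschitzness from $J_t$ (the latter via the gradient bound $\|\nabla J_t\|_F\le G$ established in Lemma~\ref{lem: ogdconvergence}), and from $|J_t|\le B$ we obtain the variance bound $\|G_t\|_F\le d^2 B/\delta$.

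Next I would observe that the update $\Sigma_{t+1}=\Pi_{\mathscr{K}_\delta}(\Sigma_t-\rho G_t)$ is exactly projected online gradient descent on the sequence $\hat J_1,\hat J_2,\dots$ over the $\delta$-shrunken set $\mathscr{K}_\delta$, driven by the unbiased stochastic gradients $G_t$. Combining convexity of $\hat J_t$ with the OGD descent lemma \cite{zinkevich2003online} and taking expectations, for any comparator $\Sigma^\star\in\mathscr{K}_\delta$,
\[
\mathbb{E}\!\Big[\sum_{t=1}^T\hat J_t(\Sigma_t)-\hat J_t(\Sigma^\star)\Big]\;\le\;\frac{\|\Sigma_1-\Sigma^\star\|_F^2}{2\rho}+\frac{\rho}{2}\sum_{t=1}^T\mathbb{E}\|G_t\|_F^2\;\le\;\frac{D^2}{2\rho}+\frac{\rho T d^4B^2}{2\delta^2},
\]
where I used $\|\Sigma\|_F\le Tr(\Sigma)\le Tr(\Sigma_0)=D$ on $\mathscr{K}\supseteq\mathscr{K}_\delta$.

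The remaining step is to convert this into a bound on the true regret $\mathbb{E}[\sum_t J_t(\Sigma_t')]-\min_{\Sigma\in\mathscr{K}}\sum_t J_t(\Sigma)$ by adding and subtracting $\hat J_t(\Sigma_t)$ and $\min_{\mathscr{K}_\delta}\sum_t\hat J_t$. This produces four correction terms: (i) $J_t(\Sigma_t')-J_t(\Sigma_t)$, at most $G\delta$ per round since $\|\Sigma_t'-\Sigma_t\|_F=\delta$ and $J_t$ is $G$-Lipschitz; (ii) $|J_t-\hat J_t|\le G\delta$ uniformly, by the same Lipschitz bound; (iii) the gap between minimizing $\hat J_t$ and $J_t$ over $\mathscr{K}_\delta$, again $\le G\delta$ per round; and (iv) the cost of shrinking the comparator set from $\mathscr{K}$ to $\mathscr{K}_\delta$. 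For (iv) I would use the normalization $\Sigma_0\succeq I$: it forces $\mathscr{K}$ to contain the Frobenius ball of radius $\tfrac12$ about $\tfrac12 I$ (any $\Sigma$ with $\|\Sigma-\tfrac12 I\|_F\le\tfrac12$ has spectrum in $[0,1]$, hence $O\preceq\Sigma\preceq I\preceq\Sigma_0$), so the contraction mapping $\mathscr{K}\to\mathscr{K}_\delta$ displaces any point by $\mathcal{O}(\delta D)$ and thus costs $\mathcal{O}(G D\delta)$ per round. Summing, the true regret is bounded by the displayed OGD bound plus $\mathcal{O}\big(GD\,\delta T\big)$.

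Finally I would substitute $\rho=\frac{D}{d^2T^{3/4}}$ and $\delta=T^{-1/4}$: then $\frac{D^2}{2\rho}=\frac{Dd^2}{2}T^{3/4}$, $\frac{\rho Td^4B^2}{2\delta^2}=\frac{Dd^2B^2}{2}T^{3/4}$, and all correction terms are $\mathcal{O}(GD\,T^{3/4})$, so every contribution scales as $T^{3/4}$. Collecting the constants and bounding each by $d^2DGB$ (using that $d,D,G,B\ge 1$ after a harmless rescaling of units if necessary) yields $R_{T,\mathrm{bandit}}\le 9d^2DGB\,T^{3/4}$. The one genuinely non-mechanical point is step (iv): showing the $\delta$-shrinkage is benign, i.e.\ that $\mathscr{K}$ admits an inscribed ball of \emph{constant} radius so the shrinkage error does not degrade with $d$ or $T$ --- this is exactly what the assumption $\Sigma_0\succeq I$ buys. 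Everything else is the textbook FKM trade-off, the only bookkeeping wrinkle being that the ambient dimension is $d^2$ rather than $d$; note that, unlike Lemma~\ref{lem: ogdconvergence}, strong convexity of $J_t$ plays no role here, which is why the rate is $T^{3/4}$ rather than logarithmic.
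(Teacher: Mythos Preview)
Your proposal is correct and follows essentially the same approach as the paper: the paper's own proof is a one-line appeal to \cite[Theorem~6.6]{hazan2016introduction} together with the boundedness of $B$, $D$, $G$ from Lemma~\ref{lem: ogdconvergence}, and you have simply reconstructed the standard FKM bias--variance argument underlying that theorem, with the correct bookkeeping that the ambient dimension is $d^2$. The one place you go beyond the paper is in spelling out why the shrinkage $\mathscr{K}\to\mathscr{K}_\delta$ is benign via the inscribed ball guaranteed by $\Sigma_0\succeq I$; the paper leaves this implicit in the citation.
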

The proof of Lemma~\ref{lemma_bandit} follows directly from \cite[Theorem~6.6] {hazan2016introduction}, and noting that the sender's cost is bounded by $B<\infty$, $D$ and $G$ values are found in the proof of Lemma~\ref{lem: ogdconvergence}. We note that under the full-feedback setting, since the sender is able to find the gradient perfectly, it can apply the online gradient descent method efficiently. As a result, the sender obtains $O(\log(T))$ convergence rate. On the other hand, in the bandit limited case, since the sender cannot access the gradient at the end of each round, the convergence rate of the online algorithm reduces to $O(T^{\frac{3}{4}})$. In the following section, we consider the online Bayesian Persuasion problem under the more general class of cost functions for the sender.
 \section{Bayesian Persuasion with General Convex Costs}
 
In this section, we consider the original Bayesian Persuasion problem without signaling costs for ease of exposition. The solution approach illustrated in Section~\ref{subsec: solvestatic} relies heavily on forming an SDP, which can later be solved efficiently due to its convex nature. Although we assumed the sender's cost to be quadratic, it can be generalized to consider polynomial costs of the form
\begin{align}\label{eqn_sender_cost}\nonumber\\[-20pt]
v_s(\omega,u)=&a^\top\omega+b^\top u+c\omega^\top C^\top C \omega\nonumber\\&+du^\top D^\top D u+e\omega^\top E^\top E u,   
\end{align}
\begin{align*}
    \\[-45pt]
\end{align*}
while still retaining quadratic cost for the receiver. This implies that the receiver's best response is still $u=-\Lambda\hat{\omega}$. Then, by using the linearity of trace and the law of iterated expectation, the sender's optimization problem becomes 
\begin{align*}\\[-20pt]
    \!\min_\eta\mathbb{E}[v_s(\omega,u)]\!=&\min_\eta cTr(C^\top\!\! C\omega\omega^\top\!)\nonumber\!+\! dTr(\Lambda^\top\! D^\top\! D\!\Lambda \hat{\omega}\hat{\omega}^\top\!)\\&-eTr(E^\top E\Lambda\hat{\omega}\hat{\omega}^\top)=\min_{\Sigma_0\succeq S\succeq O}Tr(VS),\\[-20pt]
\end{align*}
where $V=d\Lambda^\top D^\top D\Lambda-eE^\top E\Lambda$. With the more generalized polynomial cost function considered in (\ref{eqn_sender_cost}), we see that the problem still falls into the domain of convex optimization and can be solved in an online setting as described in Section~\ref{Sect:Online_pers}. We note that one of the most commonly used bi-linear cost functions falls into this general class and thus can be solved by using the SDP approach.

However, the SDP approach heavily relies on the quadratic nature of the sender's cost, even for the more generalized bi-linear cost function mentioned above. We now introduce an alternative parameterization that is useful to solve problems with general convex costs (concave utilities) for the sender. We still restrict our attention to cases when the receiver's cost is such that its best response action $u$ is linear in the posterior mean of the state $\hat{\omega}$. In particular, this assumption is valid when the receiver's cost is quadratic, as in Section~\ref{sec:System_model}. Formally, this implies that the receiver's best response can be written as $u=-\Lambda\hat{\omega}$ for some $\Lambda$ which depends on the receiver's cost (type), $\hat{\omega}=\mathbb{E}[\omega|y]$. In general, for a given sender's cost function $v_s(\omega, u)$, the sender's optimal signaling policy can be arbitrary. In
this section, we restrict our attention to the case where the sender's policy is linear, that is $\eta(\omega)=L\omega$ for some $L\in\mathbb{R}^{d\times d}$.\footnote{Note that we restrict our attention to linear policies and do not consider noise here due to tractability in optimization. We leave further investigation of optimal strategies as future work.} This implies that $\hat{\omega}=\Sigma_0L(L\Sigma_0L^\top)^\dagger y$. Thus, the receiver's best response is given by $u=-\Lambda\Sigma_0L(L\Sigma_0L^\top)^\dagger L\omega$. Plugging this back into the sender's optimization problem gives $\!\min_{L\in\mathbb{R}^{d\times d}} v_s(\omega, \!-\Lambda\Sigma_0L(L\Sigma_0L^\top\!)^\dagger L\omega)$, which is not a convex function in $L$. To circumvent this issue, we let $\Sigma_0^\frac{1}{2}L(L\Sigma_0L^\top)^\dagger L\Sigma_0^\frac{1}{2}=W$ and then parameterize the sender's optimization problem as $\min_{W\in\mathbb{R}^{d\times d}}v_s(\omega,-\Lambda\Sigma_0^\frac{1}{2}W\Sigma_0^{-\frac{1}{2}}\omega)$, which is linear in $W$, and thus the objective is convex in $W$. The catch, however, is that $W$ belongs to a non-convex set as shown in following lemma.
\vspace{-0.6em}
\begin{lemma}\label{Lemma_5_1}
     Let $\mathscr{M}$ be the set of matrices given by $\mathscr{M}:=\{W|W = \Sigma_0^\frac{1}{2}L^\top(L\Sigma_0L^\top)^\dagger L\Sigma_0^\frac{1}{2}, L\in\mathbb{R}^{d\times d}\}$. Then, $\mathscr{M}$ is not a convex set. 
\end{lemma}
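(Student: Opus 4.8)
The plan is to recognize $\mathscr{M}$ as a set of orthogonal projection matrices and then exhibit a convex combination of two of its members that fails to be idempotent. First I would simplify the defining expression via the substitution $N := L\Sigma_0^{1/2}$, which is well defined since $\Sigma_0\succeq O$. Using $\Sigma_0^{1/2}L^\top = N^\top$ and $L\Sigma_0 L^\top = (L\Sigma_0^{1/2})(L\Sigma_0^{1/2})^\top = NN^\top$, the matrix in the definition of $\mathscr{M}$ becomes
\begin{align*}
W = N^\top (NN^\top)^\dagger N,
\end{align*}
so $\mathscr{M}$ is contained in $\{\,N^\top(NN^\top)^\dagger N : N\in\mathbb{R}^{d\times d}\,\}$.

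The second step is to observe that every such $W$ is an orthogonal projector. It is symmetric, since $(NN^\top)^\dagger$ is symmetric, and applying the Moore--Penrose identity $A^\dagger A A^\dagger = A^\dagger$ with $A = NN^\top$ gives
\begin{align*}
W^2 &= N^\top (NN^\top)^\dagger (NN^\top)(NN^\top)^\dagger N \\
&= N^\top(NN^\top)^\dagger N = W .
\end{align*}
Hence $W$ is the orthogonal projection onto the row space of $N$, and in particular every eigenvalue of every element of $\mathscr{M}$ lies in $\{0,1\}$.

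Finally, it suffices to display two members of $\mathscr{M}$ whose midpoint is not idempotent. Taking $L = O$ yields $W = O\in\mathscr{M}$, and (assuming, as is standard in this setting, that $\Sigma_0\succ O$) taking $L = I$ yields $N = \Sigma_0^{1/2}$, $NN^\top = \Sigma_0$, and $W = \Sigma_0^{1/2}\Sigma_0^{-1}\Sigma_0^{1/2} = I\in\mathscr{M}$. Their convex combination $\tfrac12(O+I) = \tfrac12 I$ has every eigenvalue equal to $\tfrac12\notin\{0,1\}$, hence is not idempotent and therefore cannot lie in $\mathscr{M}$; thus $\mathscr{M}$ is not convex. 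When $\Sigma_0$ is merely positive semidefinite one instead replaces $I$ by the rank-one projector $vv^\top$ onto a line $\mathrm{span}\{v\}\subseteq\mathrm{range}(\Sigma_0)$ (attained by a suitable rank-one $L$) and uses $\tfrac12 vv^\top\notin\mathscr{M}$. The only mildly delicate point is the pseudoinverse bookkeeping in the rank-deficient case, namely checking that the claimed projectors really are of the form $N^\top(NN^\top)^\dagger N$ for some admissible $N = L\Sigma_0^{1/2}$; the convexity obstruction itself is then a one-line eigenvalue observation.
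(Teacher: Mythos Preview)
Your proposal is correct and follows essentially the same approach as the paper: both reduce $\mathscr{M}$ via the substitution $N=L\Sigma_0^{1/2}$ to the set of matrices $N^\top(NN^\top)^\dagger N$, observe that these are orthogonal projectors (eigenvalues in $\{0,1\}$), and then exhibit a convex combination whose eigenvalues fall strictly between $0$ and $1$. Your version is marginally more explicit---you verify idempotence directly via $A^\dagger A A^\dagger=A^\dagger$ and you check that the endpoints $O$ and $I$ are actually realized by concrete choices of $L$---whereas the paper simply asserts the ``projection matrix form'' and uses two projectors sharing an eigenbasis; but the underlying idea is identical.
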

\vspace{-0.6em}
\begin{proof}
   Let $L' = \Sigma_0^\frac{1}{2}L$. Then, the set $\mathscr{M}$ can be written as $\mathscr{M}:=\{\Sigma|L'^\top(L'L'^\top)^\dagger L'\}$ for some $L'\in\mathbb{R}^{d\times d}$. This is of projection matrix form, and thus its eigenvalues can only be $\{0,1\}$. Thus, this set is non-convex. For example, let us consider $\Sigma_1 = U\Lambda_1U^\top \in \mathscr{M}$ and $\Sigma_2 = U\Lambda_2U^\top \in \mathscr{M}$, where $\Sigma_1 \neq \Sigma_2 $, and $\Lambda_1$, and $\Lambda_2$ are diagonal matrices with $(\Lambda_j)_{ii} = \{0,1\}$ for $j=1,2$. By the definition of convexity, any convex combination of these two matrices should also lie in the set $\mathscr{M}$, i.e., $\Sigma_3 = \alpha \Sigma_1 + (1-\alpha)\Sigma_2$ for any $\alpha \in (0,1).$ However, it is easy to see that $\Sigma_3 =U \Lambda_3 U^\top$ with $\Lambda_3 = \alpha \Lambda_1 + (1-\alpha)\Lambda_2$. Since there is at least one diagonal element of $\Lambda_3$ such that $(\Lambda_3)_{ii} \in (0,1)$, $\Sigma_3$ is not in the projection matrix form. Thus, the set $\mathscr{M}$ is not convex.
\end{proof}

In Lemma~\ref{Lemma_5_1}, we have shown that the set obtained by using the linear policies is not convex. However, we can still consider a convex relaxation of the set $\mathscr{M}$. We denote this set by $\mathscr{W}=\{W|I\succeq W\succeq O\}$. In particular, when the sender's utility function is linear, the optimal solution lies at the boundaries of this convex set, which are of projection matrix form (see Lemma~\ref{lem: projmat} in Appendix~B), and thus the optimum signaling policies are achievable using linear policies.

We now consider an online version of this problem with the sender's cost being a general convex cost function in a setting similar to Section~\ref{Sect:Online_pers}, where the sender plays a repeated game with a stream of receivers with varying cost functions. In each round, the sender faces a receiver with an unknown cost and commits to a signaling policy. The sender then collects a cost based on its committed signaling policy and learns the receiver's type at the end of each round (full feedback). As before, we consider bounded receiver types, or equivalently $\|\Lambda\|_F\leq\kappa'$. To indicate the dependence of the sender's cost on the receiver's action, we use a slight abuse of notation and write the sender's cost as $v_s(\Lambda)$ for a receiver of type $\Lambda$. 

For this problem, we utilize a variant of Follow the Perturbed Leader Algorithm \cite{kalai2005efficient} to obtain a sub-linear regret for the online information design with general convex costs for the sender. In particular, the algorithm requires solving a linear program at each iteration over the non-convex set $\mathscr{M}$. However, the key observation is that the optimal value for linear optimization over a convex set can be achieved at the extreme points of the convex set, which helps us relax our problem to optimization over the convex set $\mathscr{W}$. This relaxation makes the problem a convex optimization problem and thus can be solved using standard approaches. We summarize this approach in Algorithm~\ref{alg: FTPL}.
\begin{algorithm}[t]
\begin{small}
\caption{Follow the Perturbed Leader Algorithm for General Convex Cost Functions}\label{alg: FTPL}
\begin{algorithmic}[1]
\State{\textbf{Input:} $T,\Sigma_0, \mathscr{W}=\{W|I\succeq W\succeq O\}$, step size $\rho$}
\State{$W_1=\mathbb{E}_{N\sim\mathcal{D}}[\arg \max_{W\in\mathscr{W}}Tr(N W)]$}
\State{\textbf{for $t=1$ to $T$ do}}
 \State{\quad\textbf{Choose} $L_t$ such that $\Sigma_0^\frac{1}{2}W_t\Sigma_0^\frac{1}{2}=\Sigma_0L_t^\top(L_t\Sigma_0L_t)^{\dagger}L_t\Sigma_0$ and observe $\Lambda_t$}
 \State{\quad\textbf{Get utility} $v_s(\Lambda_t)$ and let $\nabla_t=\nabla v_s(\Lambda_t)$}
  \State{\quad\textbf{Update:}}  \begin{align}\label{eq: algupdate}\nonumber\\[-25pt]
\!W_{t+1}\!=\!\mathbb{E}_{N\sim\mathcal{D}}[\arg\min_{W\in\mathscr{W}}\rho\sum_{s=1}^t Tr(\nabla_sW)+Tr(NW)\}]
\end{align}
\begin{align*}
    \\[-45pt]
\end{align*}
\State{\textbf{end}}
\end{algorithmic}
\end{small}
\end{algorithm}

In contrast to utilizing strongly convex regularizing functions, Algorithm~\ref{alg: FTPL} utilizes perturbed cost functions to add randomization to decision-making at each time instant. In particular, since $N$ is symmetric, as illustrated in (\ref{eq: algupdate}), upper triangle elements of $N$ are sampled from a distribution $\mathcal{D}$. Further, assuming $\mathcal{D}$ to be uniform over $[0,1]^{d(d+1)/2}$ the distribution $\mathcal{D}$ turns out to be $(\sigma,L)$ stable with $\sigma\leq d$ and $L\leq 1$ with respect to Frobenius norm, where $\mathbb{E}_{N\sim\mathcal{D}}[\|N\|_F]=\sigma$ and $\int_N|\mathcal{D}(N)-\mathcal{D}(N-U)|d N\leq L\|U\|_F$ for all matrices $U\in \mathscr{W}$ and $\mathcal{D}(N)$ denotes the value of probability density function $\mathcal{D}$ over $N$. Consistent with the assumption of an oblivious adversary selecting the receiver at each instant, it is assumed that the adversary does not have access to the realization of this random matrix. Hence, Algorithm \ref{alg: FTPL} attains the regret bound $\mathcal{O}(2Gd\sqrt{T})$, where $\|\nabla v_s\|_F\leq G$ for all $t\in T$ \cite{hazan2016introduction}. Although sub-linear in $T$, we note that the bound has a linear dependence on the problem dimension $d$, which can be optimized by choosing a better perturbation distribution $\mathcal{D}$. Furthermore, the algorithm requires solving a linear program at every instant. One possible way to reduce the computational complexity of the algorithm further is to utilize Follow the Lazy Leader algorithm  \cite{kalai2005efficient}, which might decrease the number of switches and thus can avoid the optimal decision computation repeatedly. 
\vspace{-0.1cm}
\subsection*{Solving the Offline Problem for General Convex Costs}
Although we achieve a sub-linear regret for the online version, this heavily relies on the fact that when the cost function $v_s$ is a convex function of $W$, and thus we have $\sum_{t=1}^Tv_t(W_t)-\sum_{t=1}^Tv_t(W^*)\leq\sum_{t=1}^T\nabla_{t}^\top(W_t-W^*)$, where $v_t=v_s(\Lambda_t)$, which makes the problem amenable for solution with a linear optimization at every time step. The achievability of optimal decision at every time step, even with relaxed convex constraints on the boundaries, helps us solve the problem. However, a natural question arises as to what policy we are comparing against in hindsight. Since the set we considered originally was $\mathscr{M}$, we would be competing against the best linear strategy for the sender in hindsight, knowing all the costs $v_t(\cdot)$ over time. Applying similar tools may not be useful to solve this problem in an offline setting. However, we utilize the following approach to give an approximation to the offline problem. First, we solve the relaxed convex optimization problem and let the solution of the relaxed offline optimization problem be given by $W^*$, i.e.,  $W^* = \text{arg}\!\min_{W\in\mathscr{W}}\mathbb{E}[v(W)]$. In the following lemma, we show that by using randomized linear policies (i.e., using a probabilistic mixture of fixed linear policies), the sender can achieve $W^*$ on average.  

\vspace{-0.5em}
\begin{lemma}
    Any $W\in\mathscr{W}$ can be achieved by using randomized linear policies. 
\end{lemma}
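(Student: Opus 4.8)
The plan is to identify $\mathscr{W}$ as the convex hull of the set $\mathscr{M}$ of matrices that are exactly realizable by deterministic linear policies. Once this is established, any $W\in\mathscr{W}$ is a finite convex combination of elements of $\mathscr{M}$, each of which is induced by some fixed linear policy $\eta(\omega)=L\omega$, and the matching probabilistic mixture of these linear policies is a randomized linear policy whose induced matrix, averaged over the sender's private randomization, equals $W$.

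First I would recall from the proof of Lemma~\ref{Lemma_5_1} and from Lemma~\ref{lem: projmat} in Appendix~B that every element of $\mathscr{M}$ is an orthogonal projection (symmetric, idempotent, eigenvalues in $\{0,1\}$), and conversely that every orthogonal projection $P$ of size $d$ lies in $\mathscr{M}$ — for instance, the choice $L=P\Sigma_0^{-1/2}$ yields $\Sigma_0^{1/2}L^\top(L\Sigma_0L^\top)^\dagger L\Sigma_0^{1/2}=P\,P^\dagger P=P$. Since every orthogonal projection satisfies $O\preceq P\preceq I$ and $\mathscr{W}$ is convex, we immediately get $\mathrm{conv}(\mathscr{M})\subseteq\mathscr{W}$. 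For the reverse inclusion, take $W\in\mathscr{W}$ and diagonalize $W=\sum_{k=1}^d\lambda_k v_kv_k^\top$ with an orthonormal basis $\{v_k\}$ and $1\geq\lambda_1\geq\cdots\geq\lambda_d\geq 0$ (possible because $O\preceq W\preceq I$). Define the nested projections $P_j=\sum_{k=1}^j v_kv_k^\top$ for $j=0,1,\dots,d$ (so $P_0=O$, and each $P_j\in\mathscr{M}$), set $\lambda_{d+1}:=0$, and put $c_0=1-\lambda_1$ and $c_j=\lambda_j-\lambda_{j+1}$ for $1\leq j\leq d$. Then $c_j\geq 0$, $\sum_{j=0}^d c_j=1$, and comparing the eigenvalue along each $v_k$ gives $\sum_{j=0}^d c_j P_j=W$; hence $W\in\mathrm{conv}(\mathscr{M})$ and therefore $\mathscr{W}=\mathrm{conv}(\mathscr{M})$.

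To finish, let $L_j$ be a deterministic linear policy realizing $P_j$ (e.g. $L_j=P_j\Sigma_0^{-1/2}$). The randomized policy that commits to $\eta(\omega)=L_j\omega$ with probability $c_j$ is a probabilistic mixture of fixed linear policies; conditioned on drawing index $j$, the induced parameter matrix is $P_j$, so averaging over the sender's coin yields the target $W=\sum_j c_j P_j$. Taking $W=W^{*}$ and using that the sender's reduced cost is affine in $W$ — the trace form $Tr(\Sigma_0^{1/2}V\Sigma_0^{1/2}W)$ for the quadratic/polynomial costs of Section~\ref{sec:System_model}, and the linearized surrogate employed in Algorithm~\ref{alg: FTPL} more generally — the expected cost of this mixture equals $\mathbb{E}[v(W^{*})]$, i.e. the sender attains the relaxed offline optimum on average. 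I do not expect a genuine obstacle here: the only substantive content is the classical characterization $\mathscr{W}=\mathrm{conv}(\mathscr{M})$ via the nested-projection decomposition above, together with the bookkeeping that each extreme point is linearly implementable. The one point I would state carefully is the meaning of ``achieving $W^{*}$'': it refers to realizing the information-structure matrix (equivalently, the expected cost) in expectation, which coincides with $v(W^{*})$ precisely because the reduced objective is affine in $W$ — a mixture cannot improve on $v(W^{*})$ for a strictly convex reduced objective, which is exactly why $W^{*}$ is the natural hindsight benchmark.
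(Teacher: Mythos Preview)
Your argument is correct and follows the same overall strategy as the paper: diagonalize $W=U\Delta U^\top$, write the vector of eigenvalues in $[0,1]^d$ as a convex combination of $\{0,1\}$-valued diagonals, lift these back to orthogonal projections in the same eigenbasis, and realize each projection by a deterministic linear map. The paper appeals abstractly to Carath\'eodory's theorem on the ``unit polytope'' for the decomposition step, whereas you give the explicit telescoping construction $W=\sum_{j=0}^d(\lambda_j-\lambda_{j+1})P_j$ into nested projections and supply the concrete implementation $L_j=P_j\Sigma_0^{-1/2}$ that the paper leaves implicit; this buys you a fully constructive proof with at most $d+1$ mixture components. Your closing remarks on what ``achieving $W$'' means under an affine-in-$W$ objective are accurate and match the discussion the paper places immediately after the lemma.
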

\vspace{-0.5em}
\begin{proof}
    Consider an arbitrary $W$ such that $I\succeq W\succeq O$. Since it is a positive semi-definite matrix, we can write this as $W=U\Delta U^\top$. Further, since $I\succeq W$, $rank(W)\leq d$, where $d$ is the dimension of the state. Further, notice that if $rank(W)=d$, then $W=I$, which can be achieved using the linear policy of $L=I$. If $rank(W)<d$, then consider the vector $\Delta\in[0,1]^d$, which is a diagonal matrix, and hence can be viewed as a vector in unit polytope which lies inside the convex hull of vectors $\Delta_i=\begin{bmatrix} 0,\hdots,1,\hdots,0\end{bmatrix}$ with 1 at $i$th position, and thus by Carathedory's theorem \cite{caratheodory1911variabilitatsbereich}, we can write $\Delta$ as a convex combination of $\Delta_i$. Finally, utilizing matrices $W_i=U\Delta_iU^\top$ and finding $L_i$'s corresponding to such $W_i$'s completes the proof.
\end{proof}

If $v$ is linear, such policies would imply that we can achieve $v(W^*)$ in expectation. But if we assume that $\nabla v\leq G$, then using these randomized policies gives us an approximation of the order $Gd$, which can be arbitrarily bad compared to the relaxed objective as dimensions increase. Another possible solution would be to use iterative plane cutting methods to directly search for an optimal solution in the space of projection matrices. We leave this for a future investigation.

As an example, we can consider $v_s(\omega,u)$ to be quadratic as in the previous sections. Specifically, let $v_s(\omega,u)=\|Q_S\omega+R_Su\|^2$. Note that for such quadratic costs, we can assume linear policies for the sender without loss of generality since the optimal hindsight policy is also purely linear (without additive noise). Hence, considering the parametrization in this section, the objective can be written as $\|(Q_s-R_S\Lambda W)\omega\|^2=\omega^\top(Q_S-R_S\Lambda W)^\top(Q_S-R_S\Lambda W)\omega=Tr((Q_S-R_S\Lambda W)^\top(Q_S-R_S\Lambda W)\omega\omega^\top)$. Therefore $\mathbb{E}[v_s(\omega,u)]=\|(Q_S-R_S\Lambda W)\Sigma_0^\frac{1}{2}\|_F^2$. Further, if $Q$ is assumed to be of full rank, then this is a strongly convex function. However, the set over which $W$ needs to be optimized in the offline problem is still non-convex, as described in this section. 

\vspace{-0.20cm} \section{Numerical Results}\label{sect:num_res}
\vspace{-0.15cm}
 In this section, we provide numerical results to verify the theoretical analysis of our work. In the first numerical example, we consider a sender and a stream of receivers and a state of the world given by $\omega\!\!=\!\!\begin{bmatrix}z \!& \!\!\theta_A\!\! &\!\!\theta_B\end{bmatrix}^\top$\!\!. In this example, we consider quadratic cost functions for receivers as
\begin{align}\nonumber\\[-1.8em]
 v_r(\omega,u) =\mathbb{E}[||z+\alpha_A\theta_A+\alpha_B\theta_B-u||^2],\label{num_res_cost}
\end{align}
\begin{align*}\\[-4.5em]
\end{align*}
where the receivers' quadratic cost function depends on their types determined by parameters $(\alpha_A, \alpha_B)$. We choose $\alpha_A$ uniformly randomly in the interval $[0.1, 4.1]$, and $\alpha_B$ uniformly randomly in the interval $[0.2, 4.2]$. The sender has a fixed cost function, which is given by $v_s(\omega,u) = (1-\beta) \mathbb{E}[||z+ \theta_A-u||^2]+\beta c(\eta),$ where we set the cost of signaling to be the differential entropy function $c(\eta) =-\mathbb{E}_\eta[-\log(f(e))]$. If we substitute the receiver's best response into the sender's cost function, we get  
\begin{align}\nonumber\\[-1.8em]
     J_s(\Sigma) = (1-\beta) Tr(V \Sigma) -\beta \log (\text{det}(\Sigma_0-\Sigma)),\label{sender's_obj}
\end{align}
\begin{align*}\\[-4.5em]
\end{align*}
where for a given receiver's type $(\alpha_A, \alpha_B)$, $V$ is equal to
\begin{small}
\begin{align}\label{V_mat}
    V = \begin{bmatrix}
-1& -1 & 0  \\
-1& (\alpha_A-1)^2-1 & \alpha_B(\alpha_A-1)\\
0 & \alpha_B(\alpha_A-1) & \alpha_B^2
\end{bmatrix},
\end{align}
\end{small}and $\Sigma=\Sigma_0-\Sigma_e$. In this example, we take $\beta=0.5$, and vary the total time duration $T\in[2,100].$ We apply Algorithm~\ref{alg:ogd_1} to find the sender's average regret. We plot the sender's average regret in Fig.~\ref{Fig:sims_1_2}(a). We see that for the considered sender's and receiver's cost functions, the sender's average regret reduces to 0 with a rate of $\mathcal{O}\left(\frac{\log(T)}{T}\right)$.  
\begin{figure}[t]
 	\begin{center}
 	\subfigure[]{%
 	\includegraphics[width=0.47\linewidth]{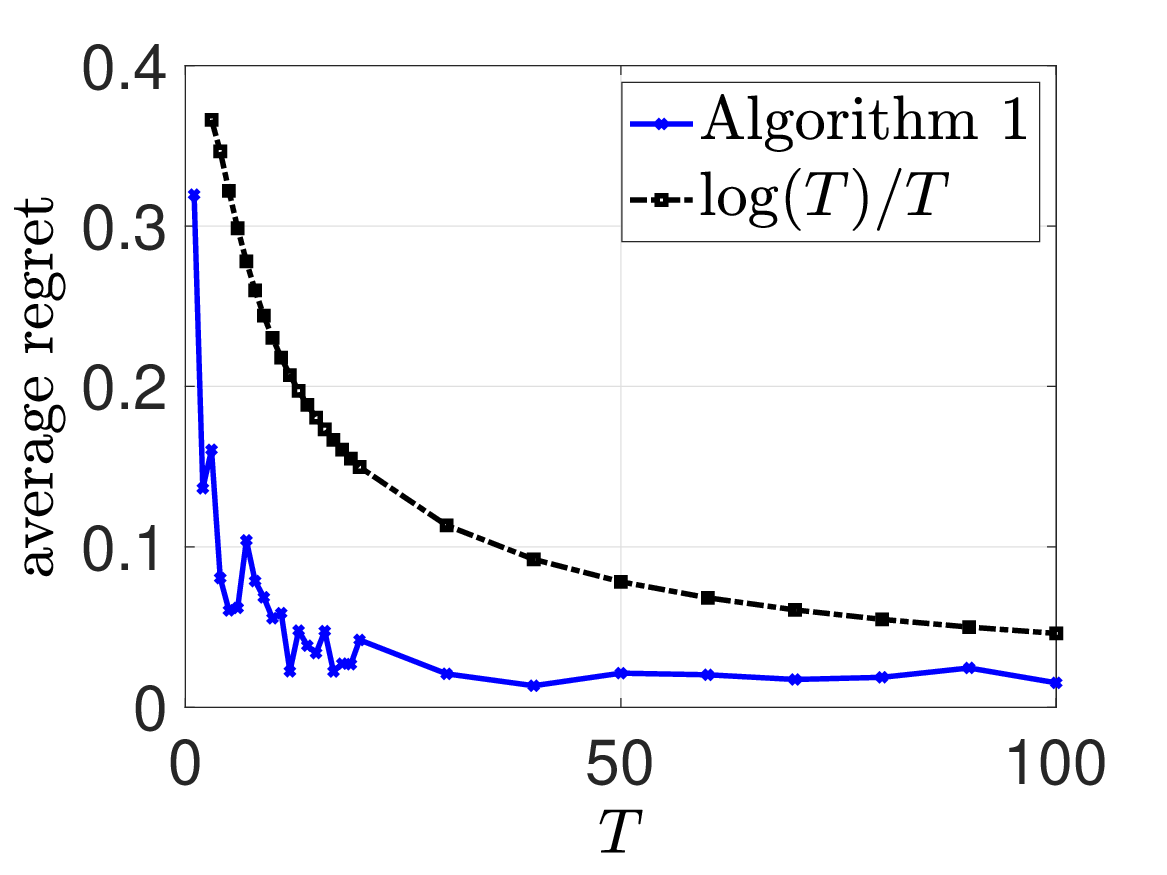}}
 	\subfigure[]{%
 	\includegraphics[width=0.47\linewidth]{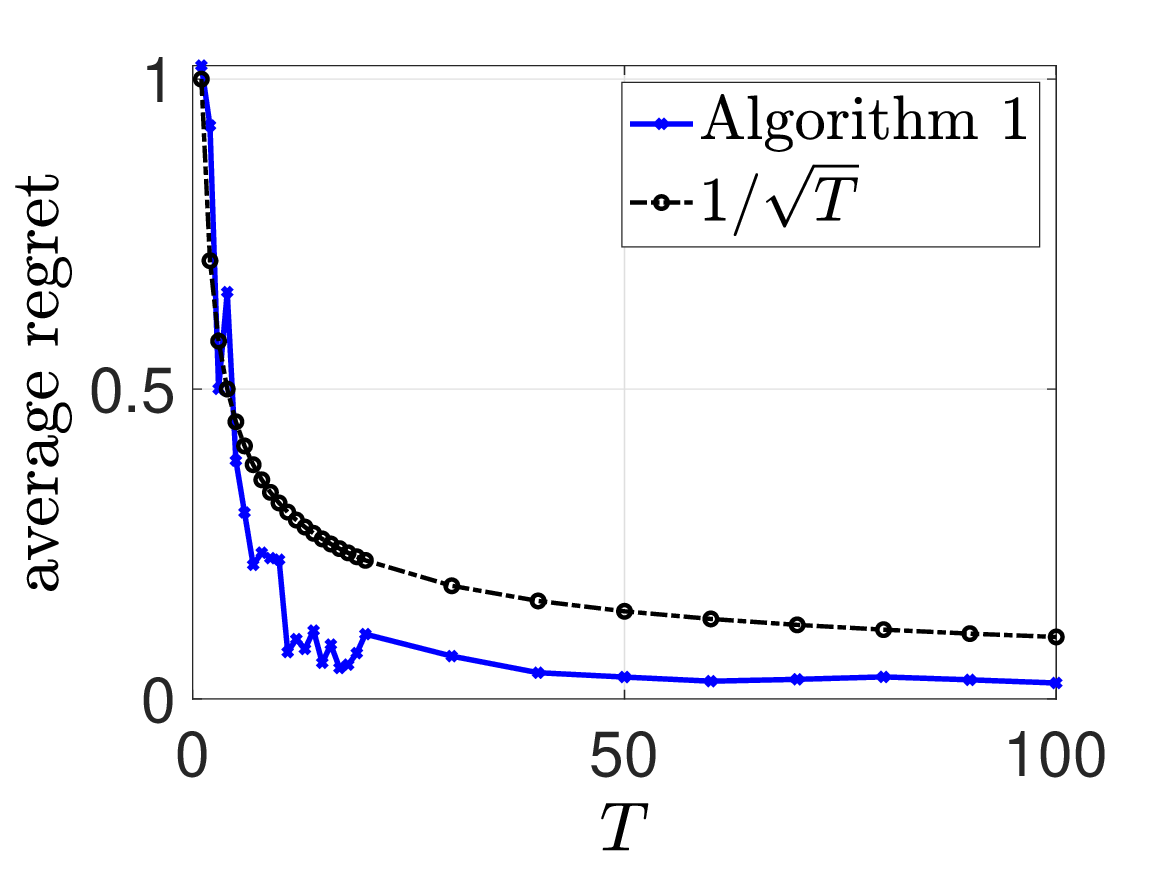}}
 	\end{center}
 	\vspace{-0.4cm}
 	\caption{The sender's average regret as a result of Algorithm~\ref{alg:ogd_1} as $T \in [1,100]$ (a) when $\beta = 0.5$ and (b) when $\beta = 0$.} 
 	\label{Fig:sims_1_2}
 	\vspace{-0.3cm}
 \end{figure}
Next, we consider the same structure as in the first numerical result, but this time, we take $\beta=0$. In this case, there is no signaling cost for the sender, and the sender's optimization problem turns out to be the traditional Bayesian Persuasion, which is given by $v_s(\Sigma) =  Tr(V \Sigma)$, where $V$ matrix is given in (\ref{V_mat}). We plot the sender's average regret in Fig.~\ref{Fig:sims_1_2}(b). As the optimization problem is no longer strongly convex, we see that the convergence rate for the online traditional Bayesian persuasion problem reduces to $\mathcal{O}\big(\!\frac{1}{\sqrt{T}\!}\big)$.
  
 \vspace{-0.5em}
 \section{Conclusion and Discussion}
\vspace{-0.3em}
In this work, we have considered an online version of information design where the sender does not know the receiver's cost function, and hence, needs to learn to play online. In the case of continuous state and action spaces, this amounts to learning the optimal directions of information revelation in the Euclidean space. First, restricting attention to quadratic cost functions, we have shown that in contrast to the discrete state case \cite{castiglioni2020online}, a sublinear regret is achievable with per round running time, which is polynomial in problem dimension. Furthermore, we have formulated a costly version of Bayesian persuasion where the sender's cost increases for choosing a more informative signal, and we have shown that linear policies are optimal when the sender has entropy costs. We have shown, both theoretically and empirically, that this regularized cost function gives a $\mathcal{O}(\log T)$ convergence. Finally, we have generalized our approach to general convex functions for the sender while restricting attention to linear strategies for the sender. As the approach used in earlier work is not adaptable to such general costs, we have developed a novel parameterization and proved a sublinear regret.

In this work, we have restricted our attention to sampling state independently at every time instant. In more general scenarios, such as in recommendation systems, multiple states might evolve in a Markovian form, and the techniques developed in this work would act as an initial step toward analyzing such settings. Furthermore, extensions to priors beyond Gaussian would be another viable research direction.

\vspace{-0.65em}
\bibliographystyle{unsrt}
\bibliography{IEEEabrv,sample-bibliography}

\begin{thebibliography}{10}

\bibitem{kamenica2011bayesian}
E.~Kamenica and M.~Gentzkow.
\newblock Bayesian persuasion.
\newblock {\em American Economic Review}, 101(6):2590--2615, 2011.

\bibitem{crawford1982strategic}
V.~P. Crawford and J.~Sobel.
\newblock Strategic information transmission.
\newblock {\em Econometrica: Journal of the Econometric Society}, pages 1431--1451, 1982.

\bibitem{bacsar1998dynamic}
T.~Ba{\c{s}}ar and G.~J. Olsder.
\newblock {\em Dynamic {N}oncooperative {G}ame {T}heory}.
\newblock SIAM, 1998.

\bibitem{gentzkow2016competition}
M.~Gentzkow and E.~Kamenica.
\newblock Competition in persuasion.
\newblock {\em The Review of Economic Studies}, 84(1):300--322, 2016.

\bibitem{wang2013bayesian}
Y.~Wang.
\newblock Bayesian persuasion with multiple receivers.
\newblock {\em Available at SSRN 2625399}, 2013.

\bibitem{kamenica2019bayesian}
E.~Kamenica.
\newblock Bayesian persuasion and information design.
\newblock {\em Annual Review of Economics}, 11:249--272, 2019.

\bibitem{gentzkow2014costly}
M.~Gentzkow and E.~Kamenica.
\newblock Costly persuasion.
\newblock {\em American Economic Review}, 104(5):457--462, 2014.

\bibitem{tamura2018bayesian}
W.~Tamura.
\newblock Bayesian persuasion with quadratic preferences.
\newblock {\em Available at SSRN 1987877}, 2018.

\bibitem{sayin2019hierarchical}
M.~O. Sayin, E.~Akyol, and T.~Ba{\c{s}}ar.
\newblock Hierarchical multistage {G}aussian signaling games in noncooperative communication and control systems.
\newblock {\em Automatica}, 107:9--20, 2019.

\bibitem{velicheti2023value}
R.~K. Velicheti, M.~Bastopcu, and T.~Ba{\c{s}}ar.
\newblock Value of information in games with multiple strategic information providers.
\newblock {\em arXiv preprint arXiv:2306.14886}, 2023.

\bibitem{saritacs2016quadratic}
S.~Sar{\i}ta{\c{s}}, S.~Y{\"u}ksel, and S.~Gezici.
\newblock Quadratic multi-dimensional signaling games and affine equilibria.
\newblock {\em IEEE Transactions on Automatic Control}, 62(2):605--619, 2016.

\bibitem{castiglioni2020online}
M.~Castiglioni, A.~Celli, A.~Marchesi, and N.~Gatti.
\newblock Online {B}ayesian persuasion.
\newblock {\em Advances in Neural Information Processing Systems}, 33:16188--16198, 2020.

\bibitem{hazan2012near}
E.~Hazan, S.~Kale, and S.~Shalev-Shwartz.
\newblock Near-optimal algorithms for online matrix prediction.
\newblock In {\em Conference on Learning Theory}, pages 38--1. JMLR Workshop and Conference Proceedings, 2012.

\bibitem{balcan2015commitment}
M.~F. Balcan, A.~Blum, N.~Haghtalab, and A.~D. Procaccia.
\newblock Commitment without regrets: Online learning in {S}tackelberg security games.
\newblock In {\em Proceedings of the Sixteenth ACM Conference on Economics and Computation}, pages 61--78, 2015.

\bibitem{9222530}
M.~O. Sayin and T.~Başar.
\newblock Persuasion-based robust sensor design against attackers with unknown control objectives.
\newblock {\em IEEE Transactions on Automatic Control}, 66(10):4589--4603, 2021.

\bibitem{cover1991information}
T.~M. Cover and J.~A. Thomas.
\newblock Information theory and statistics.
\newblock {\em Elements of Information Theory}, 1(1):279--335, 1991.

\bibitem{christiano2014online}
P.~Christiano.
\newblock Online local learning via semidefinite programming.
\newblock In {\em Proceedings of the Forty-sixth Annual ACM Symposium on Theory of Computing}, pages 468--474, 2014.

\bibitem{moridomi2018online}
K.~Moridomi, K.~Hatano, and E.~Takimoto.
\newblock Online linear optimization with the log-determinant regularizer.
\newblock {\em IEICE Transactions on Information and Systems}, 101(6):1511--1520, 2018.

\bibitem{zinkevich2003online}
M.~Zinkevich.
\newblock Online convex programming and generalized infinitesimal gradient ascent.
\newblock In {\em Proceedings of the 20th International Conference on Machine Learning (ICML-03)}, pages 928--936, 2003.

\bibitem{hazan2007logarithmic}
E.~Hazan, A.~Agarwal, and S.~Kale.
\newblock Logarithmic regret algorithms for online convex optimization.
\newblock {\em Machine Learning}, 69(2):169--192, 2007.

\bibitem{hazan2012projection}
E.~Hazan and S.~Kale.
\newblock Projection-free online learning.
\newblock {\em arXiv preprint arXiv:1206.4657}, 2012.

\bibitem{hazan2016introduction}
E.~Hazan et~al.
\newblock Introduction to online convex optimization.
\newblock {\em Foundations and Trends{\textregistered} in Optimization}, 2(3-4):157--325, 2016.

\bibitem{kalai2005efficient}
A.~Kalai and S.~Vempala.
\newblock Efficient algorithms for online decision problems.
\newblock {\em Journal of Computer and System Sciences}, 71(3):291--307, 2005.

\bibitem{caratheodory1911variabilitatsbereich}
C.~Carath{\'e}odory.
\newblock {\"U}ber den variabilit{\"a}tsbereich der fourier’schen konstanten von positiven harmonischen funktionen.
\newblock {\em Rendiconti Del Circolo Matematico di Palermo (1884-1940)}, 32(1):193--217, 1911.

\end{thebibliography}

\section*{Appendix A} 
{\bf Proof of Lemma \ref{lem: etasig}:}
    The first part of the lemma can be proved by considering the positive semi-definite matrix $\mathbb{E}[(\omega-\hat{\omega})(\omega-\hat{\omega})^\top]\succeq O$ and noting that $\mathbb{E}[\omega\hat{\omega}^\top]=O$ by using the principle of orthogonality. To prove the second part, first notice that any positive semi-definite matrix with $\Sigma_0\succeq\Sigma\succeq0$ can be written as $\Sigma=U_0^\top\Lambda_0^\frac{1}{2}T\Lambda_0^\frac{1}{2}U_0$, where $I\succeq T\succeq0$ and $\Sigma_0=U_0^\top\Lambda_0U_0$. Let $T=U_t^\top\Lambda_tU_t$. Now consider a linear policy with noise, i.e., $\eta(\omega)=L\mathbf{\omega}+\mathbf{w}$ for some matrix L of appropriate dimensions and let $\mathbf{w}\in\mathbb{R}^d$ be an independent zero-mean Gaussian noise with covariance $\Theta\succeq O$, which is to be determined. This implies $\mathbb{E}[\hat{\omega}\hat{\omega}^\top]=\Sigma_0L(L^\top\Sigma_0L+\Theta)^\dagger L^\top\Sigma_0$. Now consider $L=U_0\Lambda^{-\frac{1}{2}}_0U_t\Lambda_l$. Using $\Lambda_l=diag(\{\lambda_{l,1},\hdots,\lambda_{l,d}\})$ and $\Theta=diag(\{\theta_1^2,\hdots,\theta_d^2\})$ such that $\frac{\lambda_{l,i}^2}{\lambda_{l,i}^2+\theta_i^2}=\lambda_{t,i}$, where $\Lambda_t=diag(\{\lambda_{t,1}\hdots,\lambda_{t,d}\})$, results in $\mathbb{E}[\hat{\omega}\hat{\omega}^\top]=\Sigma$. Since $\lambda_{t,i}<1$, this satisfies $I\succeq T\succeq O$. 
\section*{Appendix B}
\begin{lemma}\label{lem: projmat}
   The boundaries of the convex set $\mathscr{W}=\{W|I\succeq W\succeq O\}$ are projection matrices. 
\end{lemma}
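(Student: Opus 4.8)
The statement is most naturally read as identifying the \emph{extreme points} of $\mathscr{W}$ with the orthogonal projection matrices: the role of this lemma in the main text is to ensure that a \emph{linear} objective minimized over $\mathscr{W}$ attains its optimum at a projection, and linear functionals on a compact convex set are optimized at extreme points. So the plan is to prove that $W$ is an extreme point of $\mathscr{W}$ if and only if $W = W^\top$ and $W^2 = W$, equivalently $W = U\Lambda U^\top$ with every eigenvalue $\lambda_i \in \{0,1\}$. (I would also flag explicitly that ``boundary'' here means ``extreme point,'' not topological boundary, since e.g. $\mathrm{diag}(\tfrac12,0)$ lies on the topological boundary of $\mathscr{W}$ yet is not a projection.)

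First I would write the spectral decomposition $W = U\Lambda U^\top$ with $U$ orthogonal and $\Lambda = \mathrm{diag}(\lambda_1,\dots,\lambda_d)$, and note that $I \succeq W \succeq O$ is exactly the condition $\lambda_i \in [0,1]$ for all $i$. For the forward direction I would argue by contrapositive: if some $\lambda_k \in (0,1)$, perturb only that eigenvalue, setting $W_\pm = U(\Lambda \pm \epsilon\, e_k e_k^\top)U^\top$ for $\epsilon > 0$ small enough that $\lambda_k \pm \epsilon \in [0,1]$; then $W_\pm \in \mathscr{W}$, $W_\pm \neq W$, and $W = \tfrac12(W_+ + W_-)$, so $W$ is not an extreme point. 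Hence every extreme point has spectrum contained in $\{0,1\}$ and is therefore a projection matrix.

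For the converse I would take a projection $W$ and suppose $W = \tfrac12(A+B)$ with $A,B \in \mathscr{W}$, then show $A = B = W$ by testing this identity on the kernel and the range of $W$. For $v \in \ker W$ we get $v^\top A v + v^\top B v = 0$ with both terms in $[0,\|v\|^2]$, forcing $Av = Bv = 0$; for $v$ with $Wv = v$, the same reasoning applied to $I - A,\, I - B \in \mathscr{W}$ gives $Av = Bv = v$; since $\ker W$ and $\mathrm{range}\,W$ span $\mathbb{R}^d$, this yields $A = B = W$, so $W$ is extreme. Combining the two directions identifies the extreme points of $\mathscr{W}$ with the projection matrices, which is precisely the set $\mathscr{M}$ of Lemma~\ref{Lemma_5_1}; consequently the linear optimization in Algorithm~\ref{alg: FTPL} has an optimal solution realizable by a noiseless linear signaling policy.

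I do not anticipate a real obstacle: the argument is an elementary spectral/convexity computation. The only point demanding care is the converse, where one must use \emph{both} sides of the constraint $O \preceq A \preceq I$ — the lower bound on $\ker W$ and the upper bound (via $I-A$) on $\mathrm{range}\,W$ — and be explicit that these two subspaces together span the whole space.
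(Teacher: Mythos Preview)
Your proof is correct, and your converse argument (projection $\Rightarrow$ extreme) is essentially the paper's: both test the identity $W = tP + (1-t)Q$ against eigenvectors of $W$ with eigenvalue $0$ and $1$, using $O \preceq P$ on the kernel and $P \preceq I$ on the range to force $P = Q = W$. Where you go beyond the paper is in supplying the forward direction. The paper's proof only shows that every projection is an extreme point; it never argues that an extreme point must be a projection, which is the inclusion actually needed to conclude that the linear program inside Algorithm~\ref{alg: FTPL} has an optimizer in $\mathscr{M}$. Your eigenvalue-perturbation argument fills that gap cleanly. Your explicit remark that ``boundary'' must be read as ``extreme point'' is also a genuine clarification, since your example $\mathrm{diag}(\tfrac12,0)$ sits on the topological boundary of $\mathscr{W}$ without being a projection.
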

\vspace{-0.5 em}
\begin{proof}
    This can be proved by contradiction. Let $W_P\in\mathscr{W}$ be a projection matrix and hence has eigenvalues $(\lambda_P)_i\in\{0,1\}$, where $(\lambda_P)_i$ is the $i$th eigenvalue. Suppose that there exist matrices $P,Q\in\mathscr{W}$ and $P,Q\notin\mathscr{M}$, where $\mathscr{M}=\{W|I\succeq W\succeq O,(\lambda_W)_i\in\{0,1\}\forall i\}$, such that $tP+(1-t)Q=W_P$. Let $p_n,p_r$ be orthonormal eigenvectors of $W_p$ such that $p_n^\top W_Pp_n=0$, $p_r\top W_Pp_r=1$. This implies $tp_n^\top Pp_n+(1-t)p_n^\top Q p_n=0$ and $tp_r^\top Pp_r+(1-t)p_r^\top Q p_r=1$. But since $P,Q\in\mathscr{W}$, we have $0\leq p^\top Pp\leq1$ and $0\leq p^\top Q p\leq 1$ for any $p$. Therefore, $p_n,p_r$ are eigenvectors of $P,Q$ with eigenvalues $0,1$, respectively. This shows that $P$ and $Q$ are also projection matrices.
\end{proof}

\end{document}